\newcommand{\etal}{{et~al.~\!}}
\title{Approximating bottleneck spanning trees on partitioned\\ tuples of points 
	\thanks{This research is supported by NSERC.}
}
\author{Ahmad Biniaz\thanks{School of Computer Science, University of Windsor, ahmad.biniaz@gmail.com}
	\and Anil Maheshwari\thanks{School of Computer Science, Carleton University, \{anil, michiel\}@scs.carleton.ca}
	\and  Michiel Smid\footnotemark[2]
}
\date{}
\newtheorem{corollary}{Corollary}
\newtheorem{theorem}{Theorem}
\newtheorem*{problem*}{Problem}
\newtheorem*{invariant*}{Invariant}
\begin{document}
	\maketitle
	\begin{abstract}
		
		We present approximation algorithms for the following NP-hard optimization problems related to bottleneck spanning trees in metric spaces.
		
		\begin{enumerate}
			
			\item  The {\em disjoint bottleneck spanning tree} problem: Given $n$ pairs of points in a metric space, find two disjoint trees each containing exactly one point from each pair and minimize the largest edge length (over all edges of both trees). It is known that approximating this problem by a factor better than 2 is NP-hard. We present a 4-approximation algorithm for this problem. This improves upon the previous best known approximation ratio of $9$. Our algorithm extends to a $(3k-2)$-approximation for a more general case where points are partitioned into $k$-tuples and we seek $k$ disjoint trees. 
			\item The {\em generalized bottleneck spanning tree} problem: Given $n$ points in some metric space that are partitioned into clusters of size at most 2, find a tree that contains exactly one point from each cluster and minimizes the largest edge length. We show that it is NP-hard to approximate this problem by a factor better than 2, and present a 3-approximation algorithm.

			\item  The {\em partitioned bottleneck spanning tree} problem: Given $kn$ points in some metric space, find $k$ trees each containing exactly $n$ points and minimize the largest edge length (over all edges of the $k$ trees). We show that it is NP-hard to approximate this problem by a factor better than 2 for any $k\geqslant 2$. We present an $\alpha$-approximation algorithm for this problem where $\alpha=2$ for $k=2,3$ and $\alpha=3$ for $k\geqslant 4$. Towards obtaining these approximation ratios we present tight upper bounds on the edge lengths of $k$ equal-size disjoint trees that can be obtained from the nodes of a given tree. This result is of independent interest. 
		\end{enumerate}
			
		 Our hardness proofs imply that it is NP-hard to approximate the non-metric version of the above problems within any constant factor. If we seek traveling salesperson tours (instead of trees) then our algorithms simply extend to achieve approximate solutions with factors three times those mentioned above. 
	\end{abstract}
	
\section{Introduction}
Spanning tree is a fundamental structure in graph theory and combinatorics.  
The problem of finding spanning trees with enforced properties has received considerable attention from both theoretical and practical points of view. For example, the minimum spanning tree (MST) problem asks for a spanning tree with minimum total edge-length, and the bottleneck spanning tree (BST) problem asks for a spanning tree whose largest edge-length is minimum. Beside their interesting theoretical properties, these problems find applications in the design of networks, including computer networks, wireless networks, and transportation networks, to name a few. Bottleneck spanning trees in particular are important in designing telecommunications networks with short connections (edges). Short connections are desirable in many ways because they require lower transmission ranges, are more secure, and cause less interference. 
This paper addresses three closely related bottleneck spanning tree problems (illustrated in Figure~\ref{intro-fig}):

\begin{figure}[htb]
	\centering
	\setlength{\tabcolsep}{0in}
	$\begin{tabular}{ccc}
		\multicolumn{1}{m{.33\columnwidth}}{\centering\includegraphics[width=.18\columnwidth]{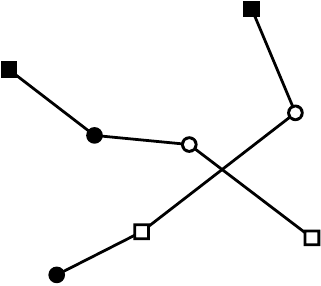}}
		&\multicolumn{1}{m{.33\columnwidth}}{\centering\includegraphics[width=.18\columnwidth]{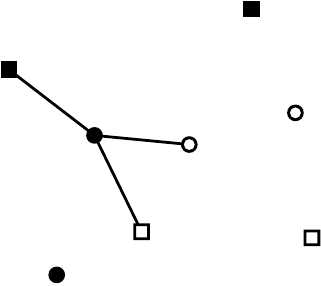}}
		
		&\multicolumn{1}{m{.33\columnwidth}}{\centering\includegraphics[width=.18\columnwidth]{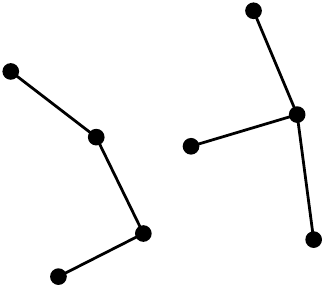}}
		\\
		(a) 2-DBST &(b) 2-GBST&(c) 2-PBST
	\end{tabular}$
	\caption{Illustration of the problems for $k=2$; black and white squares/circles represent tuples.}
	\label{intro-fig}
\end{figure}
\vspace{-8pt}

\begin{enumerate}
	
	\item  The {\em disjoint bottleneck spanning tree} ($k$-DBST) problem: Given $kn$ points in some metric space that are partitioned into $k$-tuples, find $k$ disjoint trees each containing exactly one point from each tuple and minimize the largest edge length (over all edges of the $k$ trees). 
	
	\item The {\em generalized bottleneck spanning tree} ($k$-GBST) problem: Given $n$ points in some metric space that are partitioned into clusters of size at most $k$, find a tree that contains exactly one point from each cluster and minimizes the largest edge length. The term ``spanning'' refers to span all clusters.

	\item  The {\em partitioned bottleneck spanning tree} ($k$-PBST) problem: Given $kn$ points in some metric space, find $k$ trees each containing exactly $n$ points and minimize the largest edge length (over all edges of the $k$ trees). 
\end{enumerate}

The above problems are natural generalizations of the standard BST problem. For $k=1$, all above problems are equivalent to the BST problem which can be solved optimally in polynomial time \cite{Camerini1978}. For $k\geqslant 2$, all above problems are NP-hard and cannot be approximated by a factor better than 2 unless P = NP (this will become clear shortly). The focus of this paper is on $k\geqslant 2$. We first present constant-factor approximation algorithms for $k=2$. Then we extend some of our algorithms for larger $k$.

\subsection{Some related works and applications}

The problems introduced above find real-world applications that we put into context together with some related works. In our description we implicitly assume that $k$ is at least 2.

\paragraph{(1)} The $k$-DBST problem is introduced by Arkin~\etal\cite{Arkin2017}. Motivated by the problem of maintaining secure connectivity in networks involving replicated data, Arkin~\etal\cite{Arkin2017} introduced a class of problems that ask for $k$ disjoint structures (trees, cycles, matchings) each containing one point form every given $k$-tuple.  
In particular they studied these problems for $k=2$.
Among many interesting results they presented a 9-approximation algorithm for the 2-DBST problem and an 18-approximation algorithm for computing two disjoint traveling salesperson tours (instead of trees). It is easily seen, from their Lemma 8, that the 9-approximation algorithm can be extended to achieve a $(6k-3)$-approximation for the $k$-DBST problem.  Although some of the results of Arkin~\etal\cite{Arkin2017} have been improved by Johnson \cite{Johnson2018}, their ratios 9 and 18 are still the best known. As for the lower bound, Johnson \cite{Johnson2018} showed that it is NP-hard to approximate the 2-DBST problem by a factor better than 2. 

\paragraph{(2)} The $k$-GBST problem is closely related to the $k$-generalized minimum spanning tree ($k$-GMST) problem, introduced by Myung~\etal\cite{Myung1995}. The $k$-GMST problem asks for a tree that contains exactly one point from each cluster and minimizes the total-edge length. This problem is well studied (see e.g. the recent survey by Pop~\cite{Pop2020} and references therein). The $k$-GMST problem is NP-hard even for $k=2$ in the Euclidean plane. Even a more restricted version where the two points in each cluster have the same $x$ or $y$ coordinates is NP-hard  \cite{Dey2021,Fraser2012,Ataei2018}. The metric version of the $k$-GBST can be approximated by a ratio of $2k$ using linear programming \cite{Pop2005} combined with the so-called parsimonious
property \cite{Goemans1993}. Related work \cite{Bhattacharya2015,Pop2020,Pop2001} also addresses the generalized traveling salesperson problem (TSP) in which the tour must contain exactly one point from each cluster.
The group Steiner tree is another related problem which asks for a shortest tree that contains {\em at least} one point from each cluster. The non-metric versions of both the $k$-GMST and the group Steiner tree  problems are NP-hard and cannot be approximated within any constant factor \cite{Halperin2003,Myung1995}.
Gabow~\etal\cite{Gabow1976} studied the problem of finding a path, from a source to a destination in a graph, that passes through at most one vertex from every given pair of vertices. Arkin~\etal\cite{Arkin2000} studied the multiple-choice minimum-diameter problem which is to select at least one element from each cluster to minimize the diameter of the chosen elements.
The $k$-GBST 
also lies in the concept of {\em imprecision} in computational geometry where each input point is provided as a region of uncertainty (also known as {\em neighborhood}) and the exact position of the point may be anywhere in the region; see e.g. \cite{Blanco2017,Dorrigiv2015,Loffler2009,Mitchell2007, Mitchell2010}.

Both the $k$-GBST and the $k$-GMST have real-world applications for example in the field of telecommunications, designing metropolitan area networks, interconnecting local area networks, determining location of regional service centers (e.g., stores, warehouses, agricultural settings, distribution centers). For a detailed explanation of these applications and for more examples we refer the interested reader to the paper of Myung~\etal\cite{Myung1995} and the recent survey by Pop~\cite{Pop2020}.

\paragraph{(3)} The $k$-PBST problem falls in the class of partitioning a set into subsets such that the substructures (computed on subsets) are balanced. Balanced partitioning of the input has a long history and gives rise to interesting theoretical problems. For example in the $k$-partition traveling salesperson problem we are given $k$ salespersons and the goal is to visit every city by exactly one salesperson and minimize the distance traveled by the salesperson making the longest journey \cite{Averbakh1996,Averbakh1997,Nagamochi2004}. 

The problem of $k$-balanced partitioning of a graph asks for partitioning the vertices of the graph into $k$ subsets such that the induced subgraph on each subset is connected and the maximum cardinality of the subsets is minimized. Dyer and Frieze \cite{Dyer1985} showed that this problem is NP-hard; they also showed the hardness of many variations of this problem. Chleb{\'{\i}}kov{\'{a}} \cite{Chlebikova1996} presented constant-factor approximations for $k=2,3$, and Chen~\etal\cite{Chen2019} presented a $k/2$-approximation for $k\geqslant 4$. The max-min version of this problem is also studied \cite{Chlebikova1996, Wakabayashi2007}.

Motivated by a problem from the shipbuilding industry, Andersson~\etal\cite{Andersson2003} studied the $k$-partition minimum spanning tree ($k$-PMST) problem where the goal is to partition an input point set into $k$ subsets such that the length of the longest MST on the subsets is minimized. As noted in \cite{Karakawa2011} (and references therein) the $k$-PMST problem also arises in multi-vehicle scheduling, task sequencing, and political districting. Andersson~\etal\cite{Andersson2003} showed that the $k$-PMST problem is NP-hard even for $k=2$ in the Euclidean metric in the plane, and presented $(4/3+\epsilon)$ and $(2+\epsilon)$ approximations for $k=2$ and $k\geqslant 3$, respectively. Karakawa~\etal\cite{Karakawa2011} studied this problem in higher dimensions. The $k$-PMST problem has also been studied in trees and cactus graphs under the name ``minmax subtree cover'' problem \cite{Nagamochi2004b,Nagamochi2006,Nagamochi2003}.

\subsection{Our contributions}
We study the $k$-DBST, $k$-GBST, and $k$-PBST problems in metric spaces (where distances satisfy the triangle inequality). We show the hardness as well as approximation algorithms for these problems. We present our results for the simplest version where $k=2$ (as it is easier to understand) and then extend them for larger $k$.
\begin{itemize}
\item The $2$-DBST problem is known \cite{Johnson2018} to be NP-hard and inapproximable by a factor better than 2. We present a 4-approximation algorithm for this problem. This improves the previous best known ratio of $9$ due to Arkin~\etal\cite{Arkin2017}. We extend our algorithm and achieve a $(3k-2)$-approximation for the $k$-DBST for any $k\geqslant 2$ (Theorem~\ref{kDBST-thr}).

\item The difficulty of the $2$-GBST problem lies in choosing representative points from clusters; once these points are selected, the problem is reduced to the standard BST problem. We show that it is NP-hard to approximate the $2$-GBST problem by a factor better than 2 using a reduction from 3-SAT (Theorem~\ref{2GBST-hardness}), and present a 3-approximation algorithm for this problem (Theorem~\ref{2GBST-thr}). In some part of our algorithm we show the following result which is of independent interest (Theorem~\ref{T1T2-thr}): Given a tree $T_1$ and a partitioning of its nodes into clusters of size at most two, we can obtain a tree $T_2$ that contains exactly one node from each cluster and the length of its edges is at most $3$ in the metric\footnote{In this metric the distance between two nodes $u$ and $v$ in a tree $T$ is the number of edges in the unique path between them in $T$.} of $T_1$; the upper bound $3$ is the best achievable.

\item We show that it is NP-hard to approximate the $k$-PBST problem by a factor better than 2 for any $k\geqslant 2$ (Theorem~\ref{2PBST-hardness}) using a reduction from the $2$-balanced partitioning of a graph \cite{Dyer1985}. 
We present an $\alpha$-approximation algorithm for this problem (Theorem~\ref{PBST-thr}) where $\alpha=2$ for $k=2,3$ and $\alpha=3$ for $k\geqslant 4$. Towards obtaining these approximation ratios we present tight upper bounds on the edge lengths of $k$ equal-size disjoint trees that can be obtained from the nodes of a given tree (Theorem~\ref{partitioning-thr}). This result is of independent interest. 
\end{itemize}

A straightforward implication of our hardness proofs and that of Johnson~\cite{Johnson2018} is that the non-metric versions of the above problems cannot be approximated within any constant factor. 
\paragraph{Extension to bottleneck TSP tours.}
If instead of trees in the above problems we seek TSP tours that minimize the largest edge length, then our algorithms simply extend to obtain approximate solutions with factors that are thrice those for bottleneck trees. This can be done via a known result that the {\em cube}\footnote{The cube of a graph $G$ has the same vertices as $G$, and has an edge between two distinct vertices if and only if there exists a path, with at most three edges, between them in $G$.} of every connected graph has a Hamiltonian cycle, and such a cycle can be computed in polynomial time \cite{Karaganis1968,Lesniak1973}; this is also hinted in \cite[Exercise 37.2.3]{Cormen1990}. 
To use this result, we first obtain an $\alpha$-approximate solution, namely $\cal B$, for the corresponding BST problem (using our BST algorithms) and then we find TSP tours, namely $\cal T$, in the cube of $\cal B$. By the triangle inequality the largest edge-length in the cube graph, and in particular in $\cal T$, is at most thrice the largest edge-length in $\cal B$. Notice that in all above problems the largest edge length in any optimal BST solution is a lower bound for the largest edge length in any optimal TSP solution. Thus $\cal T$ would be a $3\alpha$-approximate solution for the TSP. For example our $4$-approximation algorithm for the $2$-DBST can be extended to obtain a 12-approximation for two disjoint TSP tours that minimize the largest edge length; this improves the previous approximation ratio of 18 due to Arkin~\etal\cite{Arkin2017}. 
\paragraph{Notation.} The largest edge length in a tree $T$ is referred to as the {\em bottleneck} of $T$ and is denoted by $\lambda(T)$. We denote the distance between two points $p$ and $q$ in a metric space by $|pq|$.  Conceptually, a point set $P$ in a metric space can be viewed as a {\em metric graph}, i.e., as a complete edge-weighted graph with vertex set $P$ where the weight $w(e)$ of each edge $e=(p,q)$ is equal to the distance between $p$ and $q$, that is $w(e)=|pq|$.

\section{The $k$-DBST problem} 

Let $k\geqslant 2$ be an integer. In this section we present an approximation algorithm for the $k$-DBST problem: Given $kn$ points in some metric space that are partitioned into $k$-tuples, we want to find $k$ disjoint trees each containing exactly one point from each tuple and minimize the largest edge length (over all the $k$ trees).
We first present our approximation algorithm for $k=2$ as it is easier to understand. Then we extend the algorithm to larger $k$. 
Our algorithm benefits from the following remarkable result of K\"{o}nig which is stated in \cite{Hall1935}.

\begin{theorem}[K\"{o}nig, 1916]
	\label{Konig-thr}
	Let $S$ be any set with $kn$ elements that is partitioned, in two different ways, into $n$ subsets each with $k$ elements, namely $A_1,\dots,A_n$ and $B_1,\dots,B_n$. Then there exist $n$ elements of $S$, namely $r_1,\dots,r_n$, and a permutation $\pi$ of $\{1,\dots,n\}$ such that  $r_i\in A_i\cap B_{\pi(i)}$ for all $i\in\{1,\dots,n\}$.
\end{theorem}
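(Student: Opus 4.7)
My plan is to reduce the statement to the existence of a perfect matching in a $k$-regular bipartite multigraph, which is a well known consequence of Hall's theorem. I would first build a bipartite multigraph $H$ on vertex classes $\{A_1,\dots,A_n\}$ and $\{B_1,\dots,B_n\}$, putting one edge labelled $s$ between $A_i$ and $B_j$ for each element $s\in S$ that lies in $A_i\cap B_j$. Because every $A_i$ has exactly $k$ elements and each element of $A_i$ contributes exactly one incident edge, every vertex on the $A$-side has degree $k$, and by symmetry so does every vertex on the $B$-side; hence $H$ is $k$-regular.

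Next I would verify Hall's condition on the $A$-side. Fix $I\subseteq\{1,\dots,n\}$ and let $N(I)\subseteq\{1,\dots,n\}$ collect the indices $j$ such that some edge of $H$ joins an $A_i$ with $i\in I$ to $B_j$. The $k\,|I|$ edges leaving $\{A_i:i\in I\}$ all enter $\{B_j:j\in N(I)\}$, and each such $B_j$ can absorb at most $k$ of them because of $k$-regularity, so $k\,|N(I)|\geqslant k\,|I|$ and therefore $|N(I)|\geqslant |I|$. Hall's theorem then produces a perfect matching $M$ in $H$.

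Finally I would read off the desired conclusion. For each $i$, let $e_i\in M$ be the unique edge of $M$ incident to $A_i$, set $\pi(i)$ to be the index of its $B$-endpoint, and let $r_i$ be the label of $e_i$. Then $r_i\in A_i\cap B_{\pi(i)}$, and $\pi$ is a bijection of $\{1,\dots,n\}$ because $M$ is a perfect matching. Distinctness of the $r_i$'s comes for free, since $r_i\in A_i$ and the blocks $A_1,\dots,A_n$ are pairwise disjoint.

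The only conceptual point I expect to require care is the multigraph aspect: if several elements of $S$ lie in the same $A_i\cap B_j$, the corresponding edges of $H$ are parallel. One either invokes Hall's theorem in its bipartite-multigraph formulation, or replaces $H$ by the simple bipartite graph on the same vertex classes with $A_i$--$B_j$ of multiplicity $|A_i\cap B_j|$ and applies the standard statement; both routes are routine. I do not anticipate a genuine obstacle beyond this bookkeeping, which is consistent with the fact that the result is a short classical corollary of Hall's theorem.
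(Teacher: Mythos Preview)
Your proposal is correct. The paper does not give a standalone proof of this theorem (it is stated as K\"{o}nig's classical result), but within the proof of the subsequent Theorem~\ref{labeling-thr} the paper sketches exactly your approach: form a bipartite graph on $\{A_1,\dots,A_n\}$ and $\{B_1,\dots,B_n\}$, invoke Hall's marriage theorem to obtain a perfect matching, and pick a representative from each matched intersection $A_i\cap B_j$. Your version is slightly more careful---you work in the $k$-regular bipartite multigraph and explicitly verify Hall's condition via the degree-counting argument---whereas the paper simply asserts that Hall's theorem yields a perfect matching; but the route is the same.
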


\begin{center}
	\fbox{%
	\parbox{.85\textwidth}{%
		\vspace{3pt}
		{\bf Example.} Let $k=3$, $n=4$, $S=\{1,2,\dots,12\}$, and consider two partitions of $S$
		\vspace{-8pt} 
		$$A_1=\{1,2,3\},~ A_2=\{4,5,6\},~ A_3=\{7,8,9\},~ A_4=\{10,11,12\}$$
		\vspace{-20pt}  $$B_1=\{4,9,12\},~ B_2=\{2,8,11\},~ B_3=\{1,3,5\},~ B_4=\{6,7,10\}.$$ Then by taking $r_1=1$, $r_2=6$, $r_3=8$, $r_4=12$, and $\pi=(3,4,2,1)$ we get that
		\vspace{-8pt}  
		\[r_1\in A_1\cap B_3,~ r_2\in A_2\cap B_4,~ r_3\in A_3\cap B_2,~ r_4\in A_4\cap B_1.\vspace{-8pt}\]	}
}
\end{center}

Hall (1935) showed a more general version of K\"{o}nig's theorem (where subsets can have different sizes) as an implication of his famous result \cite{Hall1935}---today known as the Hall's marriage theorem. The set $R=\{r_1,\dots, r_n\}$ in Theorem~\ref{Konig-thr} is called a {\em complete system of representatives} for subsets $A_i$ (and also for subsets $B_i$).  
The following theorem (which is a generalized version of Lemma 8 in \cite{Arkin2017}) is an implication of K\"{o}nig's theorem.

\begin{theorem}
	\label{labeling-thr}
	Let $S$ be a set with $kn$ elements that is partitioned, in two different ways, into $n$ subsets each with $k$ elements, namely $A_1,\dots,A_n$ and $B_1,\dots,B_n$. Then, it is possible to label all elements of $S$ with $k$ distinct labels such that the $k$ elements in each of $A_1,\dots,A_n,B_1,\dots,B_n$ have $k$ distinct labels. Moreover, such a labeling can be found in polynomial time.
\end{theorem}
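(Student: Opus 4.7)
The plan is to construct the desired labeling iteratively, producing one color class at a time by repeated invocations of Theorem~\ref{Konig-thr}. At iteration $j \in \{1,\dots,k\}$, I would apply Theorem~\ref{Konig-thr} to the current reduced partitions to obtain a complete system of representatives $R^{(j)} = \{r_1^{(j)},\dots,r_n^{(j)}\}$ with $r_i^{(j)} \in A_i \cap B_{\pi_j(i)}$ for some permutation $\pi_j$, and assign label $j$ to every element of $R^{(j)}$. Then I would remove $R^{(j)}$ from $S$ and from each $A_i$ and $B_i$. Crucially, this removal takes exactly one element from each of the $2n$ subsets, so after iteration $j$ both partitions still have the common subset size $k-j$, and Theorem~\ref{Konig-thr} (applied with parameter $k-j$ in place of $k$) remains applicable at iteration $j+1$.

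After $k$ iterations, every element of $S$ has received exactly one label in $\{1,\dots,k\}$. By the construction, for each $j$ precisely one element of every $A_i$ (and precisely one element of every $B_i$) receives label $j$, so the $k$ elements inside each of the $2n$ subsets carry $k$ pairwise distinct labels, which is exactly the conclusion sought.

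For the polynomial-time claim, I would realise each invocation of Theorem~\ref{Konig-thr} as a perfect-matching computation in an auxiliary bipartite (multi)graph $G$ whose vertex bipartition is $\{A_1,\dots,A_n\} \cup \{B_1,\dots,B_n\}$ and which contains, for every element $s \in A_i \cap B_j$ currently surviving, an edge $A_iB_j$ labelled $s$. At iteration $j$ this $G$ is regular of degree $k-j+1$, hence has a perfect matching by König's theorem; such a matching can be found in polynomial time by any standard bipartite matching algorithm, and it translates directly into the required system of representatives $R^{(j)}$. Since there are only $k \le |S|$ iterations, the overall cost is polynomial in $|S|$.

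The only real obstacle is verifying that the hypotheses of Theorem~\ref{Konig-thr} continue to hold throughout the iterative process, but this is immediate: at each step exactly one element is removed from each subset of each partition, so the two partitions remain partitions into subsets of a common size. Everything else is bookkeeping and a black-box appeal to bipartite matching.
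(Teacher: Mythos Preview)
Your proposal is correct and follows essentially the same approach as the paper: iteratively extract a complete system of common representatives, label it, remove it, and repeat on the reduced partitions. The only minor difference is in how the bipartite-matching step is set up---the paper uses a simple bipartite graph with an edge $A_iB_j$ whenever $A_i\cap B_j\neq\emptyset$ and appeals to Hall's condition, whereas you use the regular bipartite multigraph (one edge per surviving element) and appeal to K\"onig's theorem on regular bipartite graphs; your variant makes the existence of the perfect matching slightly more transparent, but the underlying argument is the same.
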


\begin{proof}
	By K\"{o}nig's theorem there exists a subset $R=\{r_1,\dots,r_n\}$ of $S$ that is a complete system of representatives for subsets $A_i$ and for subsets $B_i$. Such a system $R$ can be found as follows. Construct a bipartite graph $G= (V,E)$ with $2n$ vertices such that $V=\{A_1,\dots,A_n, B_1,\dots,B_n\}$ and there is an edge between $A_i$ and $B_j$ if and only if $A_i\cap B_j\neq \emptyset$. According to Hall's marriage theorem \cite{Bondy1976,Hall1935} $G$ has a perfect matching $M$ (with $n$ edges) which can be found in polynomial time. For every edge $(A_i,B_j)$ in $M$ pick an arbitrary representative element in $A_i\cap B_j$. These $n$ representatives form $R$.
	
	Label all elements of $R$ by $l_1$. Then remove the vertices of $R$ from $S$ and from corresponding subsets $A_i$ and $B_j$. As a result we obtain a new set $S$ with $(k-1)n$ elements and two distinct partitions of $S$ each with $n$ subsets of size $k-1$. By applying K\"{o}nig's and Hall's theorems we can find another complete system of representatives, and label them $l_2$. Repeating the above process achieves a desired labeling $l_1,\dots, l_k$.
\end{proof}

In the example above we can label elements of $S$ by $k~(=3)$ labels $l_1,l_2,l_3$ where (with a slight abuse of notation) $l_3=\{1,6,8,12\}$, $l_2=\{2,5,9,10\}$, and $l_3=\{3,4,7,11\}$
such that all elements in each $A_i$ and $B_i$ have different labels.

\subsection{A $4$-approximation for the $2$-DBST}
\label{2-BST-section}

In this section we present a 4-approximation algorithm for the $2$-DBST problem. 
Let $P$ be a set of $2n$ points in a metric space that is partitioned into $n$ tuples $A_1,\dots, A_n$ each with two points. Let $\lambda^*$ denote the bottleneck of a fixed optimal solution (consisting of two trees). 
We show how to find two disjoint trees $R$ and $B$ with edges of length at most $4\lambda^*$. To simplify our description we assume that the nodes of $R$ and $B$ are colored red and blue, respectively.

We start by computing a minimum spanning tree of $P$, which is also a bottleneck spanning tree. Let $e$ be a longest edge of $T$, that is $\lambda(T)=w(e)$. Let $T_1$ and $T_2$ be the two trees obtained by removing $e$ from $T$. Notice that $\max\{\lambda(T_1),\lambda(T_2)\}\leqslant w(e)$. If each $A_i$ has a point in $T_1$ and a point in $T_2$, then we claim that $R=T_1$ and $B=T_2$ form an optimal solution because if the fixed optimal solution contains an edge between a node of $T_1$ and a node of $T_2$ then the length of that edge is at least $w(e)$ which implies that $\lambda^*\geqslant w(e)$. Therefore $\max\{\lambda(R),\lambda(B)\}\leqslant \lambda^*$.

Now assume that both points of some tuple $A_i$ belong to say $T_1$. In any feasible solution, one point of $A_i$ is red and the other is blue. Then regardless of the coloring of the nodes of $T_2$, the optimal solution should contain an edge between a node of $T_1$ and a node of $T_2$. Thus $\lambda^*\geqslant w(e)$. We are going to color the nodes of $T$ (which are the points of $P$) red and blue and then obtain $R$ and $B$ in such a way that $\max\{\lambda(R),\lambda(B)\}\leqslant 4\cdot \lambda (T)$. This will imply that $\max\{\lambda(R),\lambda(B)\}\leqslant 4 \lambda^*$.

\begin{figure}[htb]
	\centering
	\setlength{\tabcolsep}{0in}
	$\begin{tabular}{cc}
		\multicolumn{1}{m{.6\columnwidth}}{\centering\includegraphics[width=.51\columnwidth]{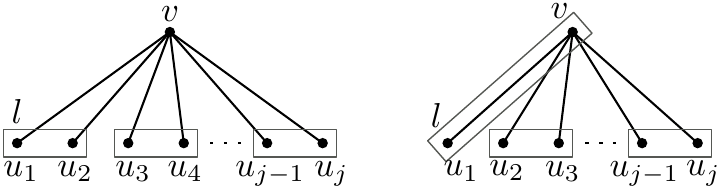}}
		&\multicolumn{1}{m{.4\columnwidth}}{\centering\includegraphics[width=.26\columnwidth]{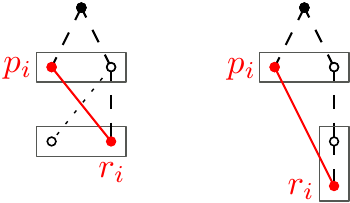}}
		\\
		(a)&(b)
	\end{tabular}$
	\caption{(a) Creating buckets. (a) Construction of $R$; dashed edges represent $\delta_i$.}
	\label{coloring-fig}
\end{figure}

We root $T$ at a leaf $q$. Then we partition the nodes of $T$ into $n$ buckets $B_1,\dots,B_n$ each with two vertices. The partitioning is done iteratively in a bottom-up fashion as follows. Consider a deepest leaf $l$ and let $v$ be its parent. Let $u_1,u_2,\dots,u_j$ be the children of $v$ where $u_1=l$ as in Figure~\ref{coloring-fig}(a). If $j$ is even then we create $j/2$ buckets $\{u_1,u_2\}, \{u_3,u_4\}, \dots,\{u_{j-1},u_j\}$, and then remove $u_1,\dots, u_j$ from $T$. If $j$ is odd then we create $(j+1)/2$ buckets $\{v,u_1\}$, $\{u_2,u_3\}, \{u_4,u_5\}, \dots,\{u_{j-1},u_j\}$, and then remove $v,u_1,\dots, u_j$ from $T$. Then we repeat the above process until $q$ and its only child form a bucket. We denote this last bucket by $B_n$. The total number of buckets is $n$ because $T$ has $2n$ nodes initially. Between any two nodes in the same bucket there exists a path of length at most $2$ in $T$, because the two nodes are either siblings or a child and its parent.

Now that we have two partitions $A_1,\dots,A_n$ and $B_1,\dots,B_n$ of $P$, we color (or label) the points of $P$ by two colors, red and blue, as in Theorem~\ref{labeling-thr}. Thus in each $A_i$ and each $B_i$ we get a red point and a blue point. We construct the tree $R$ by interconnecting the red points of buckets as follows; see Figure~\ref{coloring-fig}(b): Consider each bucket $B_i$ with $i\in\{1,\dots,n-1\}$ and let $r_i$ denote its red point. 
\begin{itemize}
	\item[(i)] If the parent of $r_i$ is not in $B_i$, then we connect $r_i$ to the red point of its parent's bucket.
	\item[(ii)] If the parent of $r_i$ is in $B_i$, then we connect $r_i$ to the red point of its grandparent's bucket.
\end{itemize}
We construct the tree $B$ on the blue points in a similar fashion. We claim that $R$ and $B$ are the desired trees. Since each $A_i$ contains a red point and a blue point (by Theorem~\ref{labeling-thr}), each of $R$ and $B$ contains exactly one point from $A_i$. Thus $R$ and $B$ form a feasible solution for the problem.
\paragraph{Analysis of the approximation ratio.} We show that $\lambda(R)\leqslant 4\cdot \lambda(T)$; an analogous argument holds for $B$. Root $R$ at the red point of $B_n$. Consider any red node $r_i$ in $R$ where $i\in\{1,\dots,n-1\}$.  Recall that $r_i\in B_i$. Let $p_i$ be the parent of $r_i$ in $R$. It suffices to show that $|r_ip_i|\leqslant 4\cdot \lambda(T)$.  Consider the unique path $\delta_i$ between $r_i$ and $p_i$ in $T$. See Figure~\ref{coloring-fig}(b). If $r_i$ was connected to $p_i$ in step (i) then $\delta_i$ has at most $3$ edges. If $r_i$ was connected to $p_i$ in step (ii) then $\delta_i$ has at most $4$ edges. Therefore $|r_ip_i|\leqslant w(\delta_i)\leqslant 4\cdot \lambda(T)$.

\subsection{A $(3k-2)$-approximation for the $k$-DBST}
Here we extend our 4-approximation algorithm of the previous section to get a $(3k-2)$-approximation for the $k$-DBST problem. We should note that (although it is not mentioned explicitly in their paper) Theorem 7 from Arkin~\etal\cite{Arkin2017} combined with their Lemma 8 already gives a  $(6k-3)$-approximation for the $k$-DBST problem. 

Let $P$ be a set of $kn$ points that is partitioned into $n$ tuples $A_1,\dots, A_n$ each with $k$ points. Let $\lambda^*$ denote the bottleneck of a fixed optimal solution (consisting of $k$ trees). 
We show how to color the points in each $A_i$ by $k$ colors $c_1,\dots,c_k$, and to obtain a tree $T_i$ on all points with color $c_i$ such that  $\lambda(T_i)\leqslant (3k-2)\lambda^*$.

Let $T$ be a minimum spanning tree of $P$. Root $T$ at a leaf $q$. We partition the nodes of $T$ into $n$ buckets $B_1,\dots,B_n$ each with $k$ nodes. The partitioning is done iteratively in a bottom-up fashion. We describe it for obtaining bucket $B_j$. For each node $v$ in the current tree $T$, let $N(v)$ denote the number of nodes in the subtree rooted at $v$, including $v$ itself. Then we look at all nodes $v$ for which $N(v)$ is at least $k$. Among those, pick a node $v$ for which $N(v)$ is minimum. 
Then $N(v)$ is at least $k$ and each of its children has a subtree of size at most $k-1$.
Now we make $B_j$: Take a leaf in the subtree of $v$, add it to $B_j$, and remove it from the tree. Repeat this until $B_j$ has size $k$. 

With the two partitions $A_1,\dots,A_n$ and $B_1,\dots,B_n$ in hand, we color the points of $P$ by $k$ colors $c_1,\dots,c_k$ as in Theorem~\ref{labeling-thr}. Thus in each $A_i$ and in each $B_i$ we get $k$ distinct colors.

\begin{wrapfigure}{r}{0.19\textwidth}
	\begin{center}
		\vspace{-22pt}
		\includegraphics[width=.15\textwidth]{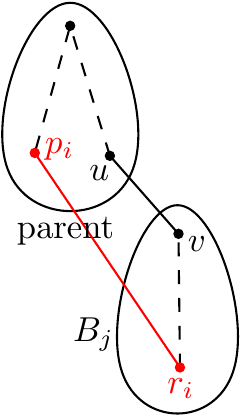}
		\label{notation-fig}
	\end{center}
	\vspace{-14pt}
\end{wrapfigure}

Notice that between any two points in the subtree of $v$ there is a path in $T$ with at most $2k-2$ edges. We say that $v$ is the {\em representative} of $B_j$. Moreover, we define the {\em parent} of $B_j$ to be the bucket containing $v$ (if $v\notin B_j$) or the bucket containing $v$'s parent (if $v\in B_j$). 
For each color $c_i$ we construct $T_i$ as follows: for each bucket $B_j$ we connect its point with color $c_i$ (say point $r_i$) to the point with color $c_i$ in $B_j$'s parent bucket (say point $p_i$). To prove the approximation ratio it suffices to show that between $r_i$ and $p_i$ there is a path of length at most $3k-2$ in $T$. This is easily seen as there is a path of length at most $k-1$ from $r_i$ to the representative of $B_j$, say $v$, and there is an edge from $v$ to a node $u$ in $B_j$'s parent bucket, and there is a path of length at most $2k-2$ between $u$ and $p_i$ in the parent bucket.
The following theorem summarizes our result.

\begin{theorem}
	\label{kDBST-thr}
	There exists a polynomial-time $(3k-2)$-approximation algorithm for the $k$-disjoint bottleneck spanning tree problem on points in a metric space. 
\end{theorem}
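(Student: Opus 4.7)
The plan is to compute an MST $T$ of $P$, partition its $kn$ nodes into $n$ buckets of size $k$ in a bottom-up, geometrically localized way, and then invoke the labeling theorem (Theorem~\ref{labeling-thr}) on the two partitions of $P$ --- the input tuples $A_1,\dots,A_n$ and the buckets $B_1,\dots,B_n$ --- to obtain a $k$-coloring in which every tuple and every bucket receives each of the $k$ colors exactly once. The points of color $c_i$ will be linked via short MST paths to form the $i$-th output tree $T_i$.

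For the bucket construction I would root $T$ at a leaf $q$ and repeatedly find a node $v$ whose remaining subtree has at least $k$ nodes and is minimal with this property, so that each child of $v$ has subtree size at most $k-1$. From $v$'s subtree I harvest any $k$ leaves (breaking them off from $T$) to form the next bucket, designating $v$ as that bucket's \emph{representative}. This procedure guarantees that (a) every bucket has exactly $k$ nodes, (b) all $k$ nodes of a bucket lie in a common subtree of depth at most $k-1$ rooted at $v$, so any two bucket-mates are joined by a path of at most $2(k-1)$ MST-edges, and (c) the \emph{parent bucket} of $B_j$ can be consistently defined as the bucket containing $v$ (when $v \notin B_j$) or the bucket containing $v$'s parent in $T$ (when $v \in B_j$), which makes the parent relation on buckets itself a rooted tree.

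With the $k$-coloring in hand, I would construct $T_i$ by connecting, for each non-root bucket $B_j$, its color-$c_i$ point $r_i$ to the color-$c_i$ point $p_i$ of $B_j$'s parent bucket. Feasibility (exactly one point of each tuple per output tree) follows immediately because Theorem~\ref{labeling-thr} guarantees each tuple $A_i$ contains exactly one point of each color. The approximation ratio reduces to bounding $|r_ip_i|$; by the triangle inequality it suffices to exhibit an MST path from $r_i$ to $p_i$, since all MST edges have weight at most $\lambda(T)$. Concretely, I walk from $r_i$ up to the representative $v$ of $B_j$ (at most $k-1$ edges), then cross at most one tree-edge into the parent bucket's subtree, and finally traverse at most $2(k-1)$ edges within the parent bucket's subtree to reach $p_i$, for a total of at most $3k-2$ MST edges. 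Hence $|r_ip_i| \leq (3k-2)\,\lambda(T)$, and combined with the standard lower bound $\lambda(T) \leq \lambda^*$ (generalizing the $k=2$ case: either some tuple is trapped inside one subtree obtained by removing a longest MST edge, forcing the optimum to use a cross-edge of weight $\geq \lambda(T)$; or every tuple splits, in which case the opt decomposes along that MST cut and the bound follows by induction on $k$) we conclude $\lambda(T_i) \leq (3k-2)\lambda^*$.

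The main obstacle I expect is verifying the bucket construction invariants: showing that a suitable representative $v$ always exists after prior buckets have been removed, that the final bucket $B_n$ correctly becomes the root, and that the parent-bucket map is acyclic so that $T_i$ is indeed a tree spanning one point per tuple. Once those invariants are pinned down, the path-length computation is a direct walk, and polynomial-time complexity follows because the bucket construction runs in $O(kn)$ total time and the $k$-coloring of Theorem~\ref{labeling-thr} is produced by $k$ iterated bipartite perfect matchings.
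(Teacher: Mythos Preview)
Your proposal matches the paper's proof essentially line for line: the same bottom-up bucket construction (picking a minimal $v$ with $N(v)\geqslant k$ and peeling off $k$ leaves), the same representative and parent-bucket definitions, the same application of Theorem~\ref{labeling-thr}, and the identical $(k-1)+1+(2k-2)=3k-2$ path-length count. The only wrinkle is your justification of $\lambda(T)\leqslant\lambda^*$: for $k\geqslant 3$, ``every tuple splits'' across the longest-edge cut does \emph{not} imply the optimum decomposes along that cut (e.g.\ one tuple may split $2$--$1$ while another splits $1$--$2$), so the recursion should instead be triggered only when all tuples place the \emph{same} number $m$ of points on one side; otherwise some optimal tree is forced to cross and $\lambda^*\geqslant\lambda(T)$ holds directly. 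That said, the paper itself omits this preprocessing step for general $k$ and simply asserts the bound, so your sketch is already at least as careful as the source.
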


\paragraph{Remark.} The length $2k-2$ within each bucket of size $k$ is the best achievable. For example consider a tree rooted at a node $v$ with $k+1$ subtrees each is a path with  $k-1$ nodes. This tree has $k^2$ nodes in total which will be partitioned into $k$ buckets of size $k$. Since there are $k+1$ leaves at least two of them lie in the same bucket (by the pigeonhole principle), and thus the distance between them will be $2k-2$.

\section{The 2-GBST problem}

In this section we study the 2-GBST problem: Given a set $P$ of $n$ points in some metric space that is partitioned into clusters of size at most $2$, find a tree that contains exactly one point from each cluster and minimizes the largest edge length. First we prove the hardness of this problem and then present an approximation algorithm.

\begin{theorem}
	\label{2GBST-hardness}
	Unless P = NP, there is no polynomial-time algorithm that approximates the metric $2$-generalized bottleneck spanning tree problem by a factor better than 2.
\end{theorem}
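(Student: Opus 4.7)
The plan is to reduce 3-SAT to deciding whether the optimal bottleneck of a metric 2-GBST instance is $1$ or at least $2$; the $(2-\varepsilon)$-inapproximability then follows in the standard gap-preserving way. Given a 3-CNF formula $\phi$ with variables $x_1,\dots,x_n$ and clauses $C_1,\dots,C_m$, I construct an instance as follows. For each variable $x_i$, introduce a size-$2$ cluster $\{T_i,F_i\}$ whose two literal-points encode $x_i$ and $\bar x_i$. For each clause $C_j$, introduce a singleton cluster $\{c_j\}$. Declare the pairwise distance to be $1$ in exactly two situations: (i) between any two literal-points that belong to different variable clusters, and (ii) between $c_j$ and the literal-point of any literal appearing in $C_j$. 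Set every remaining pairwise distance to $2$. Since all distances lie in $\{1,2\}$, the triangle inequality $d(u,v)\leqslant d(u,w)+d(w,v)$ holds automatically, so this defines a valid metric.

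Next I would argue that $\phi$ is satisfiable if and only if the constructed instance admits a feasible tree of bottleneck $1$. For the forward direction, a satisfying assignment $\tau$ tells us which literal-point to pick from each variable cluster; together with all $c_j$, these points form a valid selection of one point per cluster, every pair of selected literal-points lies at distance $1$ (they come from distinct variable clusters), and every clause vertex $c_j$ has a distance-$1$ edge to some selected literal-point (because $\tau$ satisfies $C_j$), so a spanning tree of length-$1$ edges exists. For the converse, given a feasible tree $T$ of bottleneck $\leqslant 1$, read off the truth assignment $\tau(x_i) = \mathrm{true}$ iff $T_i\in T$; every $c_j$ has a tree neighbor at distance $1$, and the only such neighbors are literal-points of literals in $C_j$, forcing that literal to be selected and hence $C_j$ to be satisfied.

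Once this equivalence is in place, the gap argument finishes the proof: a hypothetical polynomial-time $(2-\varepsilon)$-approximation would output a value strictly less than $2$ whenever $\phi$ is satisfiable (optimum $=1$) and a value at least $2$ whenever $\phi$ is unsatisfiable (optimum $\geqslant 2$), thereby deciding 3-SAT. The main delicate step is tuning the metric so that clause-satisfaction is \emph{exactly} what separates bottleneck $1$ from bottleneck $2$: making literal-points pairwise at distance $1$ across different variable clusters removes any incidental obstruction to arbitrary truth assignments, while restricting the distance-$1$ neighborhood of $c_j$ to the literal-points of $C_j$ itself is what forces honest satisfaction instead of allowing $c_j$ to hitch onto unrelated parts of the tree. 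The remaining items---polynomial size of the construction and verification that the defined distance function is indeed a metric---are immediate.
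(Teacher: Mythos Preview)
Your proof is correct and follows essentially the same reduction from 3-SAT as the paper: size-two variable clusters, singleton clause clusters, and a $\{1,2\}$-valued metric in which each clause vertex is at distance~$1$ only from its own literal vertices, so that satisfiability coincides with the existence of a bottleneck-$1$ tree. The one cosmetic difference is how connectivity among the selected literal vertices is ensured---the paper adds an extra singleton cluster $\{r\}$ with $r$ at distance~$1$ from every literal vertex (a hub), whereas you put literal vertices from distinct variable clusters at pairwise distance~$1$ (a clique); both devices work and produce the same $1$-versus-$2$ gap.
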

\begin{proof}
	We use a reduction form the traditional 3-SAT problem: given  a boolean expression $E$ as the conjunction of clauses, each of which is the disjunction of three distinct literals (a variable or its negation), decide whether $E$ is satisfiable.\let\qed\relax\end{proof}
\vspace{-8pt}
	
	\begin{wrapfigure}{r}{0.3\textwidth}
		\begin{center}
			\vspace{-18pt}
			\includegraphics[width=.29\textwidth]{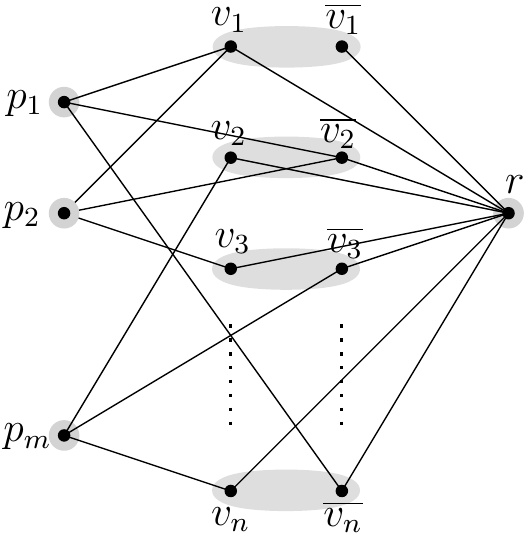}
			\label{notation-fig}
		\end{center}
		\vspace{-14pt}
	\end{wrapfigure}
	Given any instance of the 3-SAT problem consisting of an expression $E$ with $m$ clauses $C_1,\dots, C_m$ and $n$ variables $x_1,\dots,x_n$ we construct an instance of the 2-GBST problem consisting of a metric graph $G$ as follows (the vertices of $G$ represent points in a metric space). For each clause $C_j$ create a cluster with one vertex $p_j$. For each variable $x_i$ create a cluster with two {\em literal vertices} $v_i$ and $\overline{v_i}$ that  correspond to positive literal $x_i$ and negative literal $\overline{x_i}$, respectively. Create a cluster with one vertex $r$. To simplify our description we use vertices and their corresponding clauses or literals interchangeably. Connect each literal vertex, by edges of weight 1, to vertices $p_j$ of all clauses $C_j$ that they appear in. Connect $r$ to all literal vertices by edges of weight 1. All other edges of $G$ have weight 2. Notice that $G$ is a metric graph with $m+2n+1$ vertices. We show that $E$ is satisfiable if and only if $G$ has a generalized spanning tree with edges of weight 1. This would imply the statement of the theorem because (by contraposition) any approximation algorithm with factor less than 2 would give a tree with edges of weight 1, and thus could solve the 3-SAT problem.
	
	First suppose that $E$ is satisfiable, and consider a truth assignment of variables that satisfies $E$. We obtain a tree $T$ as follows. For the vertex set of $T$ we select $r$, all vertices $p_j$, and each $v_i$ (if $x_i$ is true) or $\overline{v_i}$ (if $x_i$ is false). For the edge set of $T$ we connect $r$ to every selected literal vertex, and we connect each $p_j$ to exactly one selected literal vertex that satisfies $C_j$. The tree $T$ is a feasible solution for the 2-GBST problem  on $G$ (as it contains exactly one vertex from each cluster) and all its edges have weight 1.
	
	For the other direction assume that $T$ is a generalized  spanning tree of $G$ with edges of weight 1. The tree $T$ should contain $r$ and all vertices $p_j$ because they are the only vertices in their clusters. For each $p_j$ only edges of $G$ that connect $p_j$ to literal vertices have weight 1. Thus each $p_j$ is connected to at least one literal vertex in $T$. Moreover $T$ contains exactly one vertex from each cluster $\{v_i,\overline{v_i}\}$ of literal vertices. Therefore, by setting $x_i$ as true (if $T$ contains $v_i$) or false (if $T$ contains $\overline{v_i}$) we obtain a satisfying assignment for $E$.
\qed
\vspace{8pt}

If in the proof of Theorem~\ref{2GBST-hardness} we replace all edge-weights of 2 with an arbitrary large constant, we obtain the following corollary.

\begin{corollary}
	\label{2GBST-hardness-cor}
	It is NP-hard to approximate the non-metric $2$-generalized bottleneck spanning tree problem within any constant factor.
\end{corollary}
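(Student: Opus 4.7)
The plan is to recycle the reduction used in the proof of Theorem~\ref{2GBST-hardness} verbatim, changing only the edge weights. Given a 3-SAT instance with clauses $C_1,\dots,C_m$ and variables $x_1,\dots,x_n$, I would build the same graph $G$ with the same cluster structure (one singleton cluster per clause $p_j$, one two-element cluster $\{v_i,\overline{v_i}\}$ per variable, and one singleton $\{r\}$). The edges I previously weighted $1$ (namely those from each literal vertex to the clauses it appears in, and those from $r$ to every literal vertex) retain weight $1$. Every other edge, which had weight $2$ in the original construction, I would instead give a large weight $M$, where $M$ is any constant we choose (and indeed may be taken to be any value computable in polynomial time, e.g.\ $M = n^c$ for an arbitrary constant $c$).

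The two directions of the equivalence go through unchanged. If the formula is satisfiable, the same tree constructed in the proof of Theorem~\ref{2GBST-hardness}, using only the weight-$1$ edges, is a feasible generalized spanning tree of $G$ with bottleneck $1$. Conversely, if there exists any feasible generalized spanning tree whose bottleneck is strictly less than $M$, then every edge it uses has weight $1$; repeating the case analysis of the original proof (each $p_j$ must be adjacent via a weight-$1$ edge to some literal vertex, and exactly one of $\{v_i,\overline{v_i}\}$ is chosen) yields a satisfying truth assignment.

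Consequently, a hypothetical polynomial-time $\alpha$-approximation with $\alpha < M$ would distinguish bottleneck $1$ from bottleneck $M$ and thus decide 3-SAT. Since $M$ can be made an arbitrarily large constant, this rules out any constant-factor approximation unless $\mathrm{P} = \mathrm{NP}$.

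There is no real obstacle here beyond verifying that the argument is indeed insensitive to the value assigned to the ``heavy'' edges; the only conceptual point worth flagging is that we deliberately abandon the triangle inequality (two weight-$1$ edges can form a path much shorter than a weight-$M$ direct edge), which is precisely why the statement is confined to the non-metric setting. An entirely analogous modification of Johnson's hardness reduction \cite{Johnson2018} for 2-DBST and of the hardness reduction for $k$-PBST (Theorem~\ref{2PBST-hardness}) yields the analogous corollaries promised in the abstract for those problems.
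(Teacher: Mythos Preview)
Your proposal is correct and matches the paper's own argument essentially verbatim: the paper simply states that replacing all edge-weights of $2$ in the reduction of Theorem~\ref{2GBST-hardness} by an arbitrarily large constant yields the corollary. Your added remark about abandoning the triangle inequality and about the analogous modification for the other two problems is accurate and consistent with the paper's commentary.
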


If we were interested in generalized minimum spanning trees, then our reduction in the proof of Theorem~\ref{2GBST-hardness} would also give a short proof for the NP-hardness of the metric 2-GMST problem: It can be verified that $E$ is satisfiable if and only if $G$ has a generalized spanning tree of total weight $m+n$. We note the existence of (somewhat involved) proofs for the hardness of the Euclidean 2-GMST problem; see the thesis of Fraser \cite[page 140]{Fraser2012} (reduction from maximum 2-SAT), the paper of Ataei~\etal\cite{Ataei2018} (reduction from planar 3-SAT), and a recent result of Dey~\etal\cite{Dey2021} (reduction from maximum 2-SAT).

\subsection{A $3$-approximation for the $2$-GBST}

Here we present our 3-approximation algorithm for the 2-GBST problem on a set $P$ of $n$ points in a metric space that is partitioned into $m$ clusters $C_1,\dots,C_m$, each of size at most $2$. Notice that $n/2\leqslant m\leqslant n$. Let $\lambda^*$ be the bottleneck of a fixed optimal solution. 
In a nutshell, our algorithm works as follows. First we compute a tree $T_1$ that contains ``at least'' one point from each cluster and  its bottleneck is at most $\lambda^*$. Then we obtain a tree $T_2$ from $T_1$ that contains ``exactly'' one point from each cluster and its bottleneck is at most thrice $\lambda(T_1)$. Therefore 
$$\lambda(T_2)\leqslant 3\cdot \lambda(T_1)\leqslant 3\cdot \lambda^*,$$
which means that $T_2$ is a 3-approximate solution for the 2-GBST problem. In the rest of this section we show how to construct $T_1$ and $T_2$. Our algorithm for computing $T_2$ from $T_1$ is of independent interest.
The running time of our algorithm is dominated by the computation of a minimum spanning tree. 
The following theorem summarizes our result.

\begin{theorem}
	\label{2GBST-thr}
	There exists a polynomial-time 3-approximation algorithm for the $2$-generalized bottleneck spanning tree problem on points in a metric space. 
\end{theorem}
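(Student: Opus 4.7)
The plan is to instantiate the two-stage strategy outlined just before the theorem, showing that each stage runs in polynomial time and that their approximation losses compose to the promised factor of $3$.

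For the first stage I would compute a minimum spanning tree $T$ of $P$; recall that an MST is also a bottleneck spanning tree. Using the standard fact that for any threshold $w$ the connected components of the graph $G_{\leq w}$ (keeping only edges of weight at most $w$) coincide with the components of $T$ after deleting edges of weight exceeding $w$, I would sweep the edges of $T$ in non-decreasing order of weight and find the smallest threshold $w^{\star}$ for which some component of the resulting forest meets every cluster. A fixed optimal generalized bottleneck spanning tree uses only edges of weight at most $\lambda^{*}$, is connected, and hits every cluster, so its vertices lie in a single component of $G_{\leq \lambda^{*}}$; hence $w^{\star} \leq \lambda^{*}$. Letting $T_1$ be that witnessing component, viewed as a subtree of $T$, gives a tree that contains at least one point from every cluster and satisfies $\lambda(T_1) \leq \lambda^{*}$.

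For the second stage I would invoke Theorem~\ref{T1T2-thr} on $T_1$ (whose nodes inherit a partition into clusters of size at most two) to extract a tree $T_2$ that contains exactly one node from each cluster and whose every edge corresponds to a path of at most three edges of $T_1$. Since every edge of $T_1$ has length at most $\lambda(T_1)$, the triangle inequality in the underlying metric gives
\[
\lambda(T_2) \;\leq\; 3\,\lambda(T_1) \;\leq\; 3\,\lambda^{*},
\]
which is exactly the claimed ratio. The total running time is dominated by the MST computation together with the polynomial-time procedure supplied by Theorem~\ref{T1T2-thr}.

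I expect the entire difficulty to sit in the second stage, i.e., in the proof of Theorem~\ref{T1T2-thr}, not in wrapping it into the approximation. The first stage is a routine MST/threshold argument, whereas the second must simultaneously commit to a representative for every pair cluster and guarantee that along every maximal chain of unselected nodes lying between two neighboring representatives the $T_1$-distance stays at most three. My plan for that theorem would be to root $T_1$ at an arbitrary leaf and sweep its nodes in post-order, choosing the representative of each pair cluster so that the chain of unselected nodes separating two consecutive selected nodes in $T_2$ contains at most two vertices; a small caterpillar-style example, where pair clusters straddle neighboring edges, would then confirm that the constant $3$ cannot be improved.
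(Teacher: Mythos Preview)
Your proposal is correct and essentially matches the paper: both construct a tree $T_1$ meeting every cluster with $\lambda(T_1)\le\lambda^{*}$ (you use the MST-subtree variant the paper itself notes in a remark, while the paper's main text adds edges Kruskal-style until some component hits all clusters) and then invoke Theorem~\ref{T1T2-thr} plus the triangle inequality to obtain $\lambda(T_2)\le 3\lambda^{*}$. Your closing sketch of how you would prove Theorem~\ref{T1T2-thr} (post-order sweep) is for a separate statement and is vaguer than the paper's explicit two-phase node-selection procedure, but the wrapper argument for Theorem~\ref{2GBST-thr} is the same.
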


\subsubsection{Construction of $T_1$}

First we make an empty graph $G$ over the $n$ points of $P$. Then we add edges between the points of $G$ in a non-decreasing order of the distances, and stop as soon as $G$ has a connected component, say $C$, that contains at least one point from each cluster. All edges of $C$ are of length at most $\lambda^*$. Now we compute $T_1$ as an arbitrary spanning tree of $C$.

\paragraph{Remark.} When the running time is a concern, one can guess $\lambda^*$ in a binary search fashion to speed up the algorithm. Also, it is possible to compute $T_1$ as a subtree of the minimum spanning tree of $P$. In this case, the total running time is dominated by the computation of the minimum spanning tree; the details are removed as we are not concerned about the running time here.

\subsubsection{Construction of $T_2$}

In this section we prove the following theorem.  

\begin{theorem}
	\label{T1T2-thr}
	Given a tree $T_1$ and a partitioning of its nodes into clusters of size at most two, we can obtain a tree $T_2$ that contains exactly one node from each cluster and the length of its edges is at most $3$ in the metric of $T_1$. The upper bound $3$ is the best achievable.
\end{theorem}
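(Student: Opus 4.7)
The plan is to give an algorithmic construction of $T_2$. First, I would root $T_1$ at an arbitrary node and, adapting the bucket construction from Section~\ref{2-BST-section}, partition $V(T_1)$ into buckets of size at most two. Each $2$-bucket is either a sibling pair or a parent-child pair, so that any two vertices in the same bucket lie at $T_1$-distance at most~$2$.

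Next, I would use a combinatorial argument to pick exactly one node from each input cluster in a way that is ``compatible'' with the bucket structure. Since both the clusters and the buckets are partitions of $V(T_1)$ whose parts have size at most two, the bipartite cluster-bucket incidence multigraph has maximum degree two and therefore decomposes into disjoint paths and cycles. On each component I would do a case analysis that selects representatives so that (a)~every cluster contributes exactly one node, and (b)~in every parent-child bucket the selected node is the \emph{parent} whenever possible. On cycles, (a) and~(b) are jointly realizable by a proper alternating selection; on paths, the singleton endpoints force a deterministic propagation of choices, and a small local modification of the bucketing handles the remaining edge cases.

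Given these representatives, $T_2$ is built by linking, for each non-root bucket $B$, its selected node to the selected node of the parent bucket. Because each bucket has $T_1$-diameter at most $2$, the hop between two adjacent buckets costs one $T_1$-edge, and condition~(b) ensures that the destination selected node is either the top vertex of the parent bucket or adjacent to it; hence every $T_2$-edge has $T_1$-length at most~$3$. This improves on the bound of $4$ obtained in the $2$-DBST analysis because here we build only one tree, and so condition~(b) gives us enough freedom to avoid the grandparent detour (step~(ii) of Section~\ref{2-BST-section}).

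The main obstacle will be proving that conditions (a) and~(b) can always be established jointly; this requires a careful analysis of the path and cycle components of the incidence multigraph, especially pathological paths whose endpoints are singleton buckets containing two members of the same $2$-cluster, which may force a local re-pairing of the bucket construction. For the tightness of the bound $3$, I would exhibit a small concrete instance---a tree with a carefully chosen partition of its nodes into clusters of size at most~$2$---for which every valid transversal forces some pair of representatives to lie at $T_1$-distance exactly~$3$.
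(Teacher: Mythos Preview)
Your approach has a genuine gap. The crux is the claim that, on each cycle of the cluster--bucket incidence multigraph, conditions (a) and (b) can be jointly realised ``by a proper alternating selection''. That claim is false. Consider the tree rooted at a leaf $r$ with $r\!-\!s$, $s$ having children $a,b$, $a$ having child $p_1$, $p_1$ having child $c_1$, $b$ having child $p_2$, and $p_2$ having child $c_2$. The construction of Section~\ref{2-BST-section} produces the parent--child buckets $B_1=\{p_1,c_1\}$, $B_2=\{p_2,c_2\}$, $B_4=\{r,s\}$ and the sibling bucket $B_3=\{a,b\}$. Take the clusters $\{p_1,b\}$, $\{a,p_2\}$, $\{c_1,c_2\}$, $\{r,s\}$. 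The first three clusters together with $B_1,B_2,B_3$ form a single $6$-cycle in the incidence multigraph. Its two alternating selections are $\{p_1,c_2,a\}$ and $\{c_1,p_2,b\}$; in the first, (b) fails at $B_2$ and linking $c_2$ to the selected node $a$ of the parent bucket $B_3$ costs $|c_2a|=4$; in the second, (b) fails at $B_1$ and linking $c_1$ to $b$ in $B_3$ again costs $4$. So neither orientation avoids a length-$4$ edge, and your stated repair (local re-pairing of the bucketing) is only invoked for path components with singleton endpoints, not for cycles.

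The paper does \emph{not} use buckets here. Instead it runs a ``select-and-burn'' traversal: root $T_1$, select every singleton cluster, and then repeatedly select an open node, burn its cluster twin, and continue the process from the twin's parent if open, else from an open child of the twin. The key invariant this buys is local: whenever two consecutive ancestors $a_1,a_2$ of a selected node are both burned, the order in which they were burned forces either the grandparent $a_3$ or some sibling $a'_1$ of $a_1$ to have been selected, giving a selected node within $T_1$-distance $3$. This argument is tied to the traversal rule rather than to any global transversal of a bucket partition, and that is exactly what your scheme is missing.
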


First we show that the distance $3$ (in the metric of $T_1$) is the best achievable upper bound. Figure~\ref{lbound-fig} illustrates a tree $T_1$ as a path with eight nodes. The nodes of $T_1$ are partitioned into five clusters $\{a\}, \{b_1,b_2\},\allowbreak  \{c_1,c_2\},\allowbreak \{d_1,d_2\},\allowbreak \{e\}$. To obtain $T_2$ we have to choose points $a$ and $e$ because they are the only points in their clusters. Due to symmetry we may choose $b_1$ from cluster $\{b_1,b_2\}$. In this case if we do not choose $d_2$ then the distance of $e$ to its closest point in $T_2$ would be at least $3$, thus we may assume $d_2$ is chosen. In this setting, if we choose $c_1$ (as depicted in Figure~\ref{lbound-fig}) then the distance between $c_1$ and $d_2$ will be $3$, and if we choose $c_2$ then the distance between $b_1$ and $c_2$ will be $3$. Thus, in all cases we get an edge of length $3$ in $T_2$.

\begin{figure}[htb]
	\centering
	\includegraphics[width=.48\columnwidth]{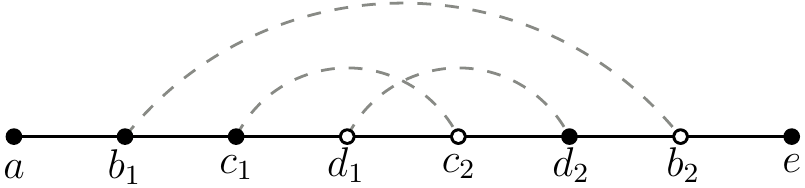}
	\caption{Illustration of the lower bound $3$. Dashed lines represent two nodes belonging to the same cluster. The black vertices are chosen for $T_2$.}
	\label{lbound-fig}
\end{figure}

Now we present an algorithm for obtaining $T_2$. Our algorithm consists of two phases: In the first phase we select the nodes of $T_2$ and in the second phase we define its edges. To select the nodes of $T_2$, we {\em visit} the nodes of $T_1$ in some order and {\em select} exactly one node from each cluster.  While visiting the nodes of $T_1$ we refer to an unvisited node by {\em open node}, to a visited node that is selected by {\em selected node}, and to a visited node that is not selected by {\em burned node}.

\vspace{10pt}				
\noindent{\bf Node selection.} See Figure~\ref{local-fig}(a) for an illustration of this phase. At the beginning all nodes of $T_1$ are open. First we visit and select all nodes of clusters of size one (which must be in $T_2$). Now we are going to select exactly one node from each cluster of size two. We root $T_1$ at an arbitrary node. Then we repeat the following process until all nodes of $T_1$ are visited. The process starts from an open node. At the beginning if the root is open then we start from the root, otherwise start from an arbitrary open node. In  Figure~\ref{local-fig}(a) the nodes are labeled by the order they have been visited; the nodes of clusters of size one (which are already visited) are labeled with 0s.

\begin{center}
	\fbox{%
		\parbox{.85\textwidth}{%
			\vspace{3pt}
			{\bf Process}: Let $a_1$ denote the starting open node (which belongs to a cluster of size two). Select $a_1$ and burn its twin say $a_2$. If the parent of $a_2$ is open then repeat the process starting from the parent. If the parent is not open (selected or burned) then check the children if $a_2$. If $a_2$ has some open child then repeat the process starting from an open child. If $a_2$ has no open child (or if $a_2$ does not have any child at all) then repeat the process starting from an arbitrary open node if such a node exists otherwise terminate the node selection phase.}
	}
\end{center}

\vspace{10pt}				
\noindent{\bf Defining edges.}  The node selection algorithm selects exactly one node from each cluster. At the end of the selection algorithm, every node is either selected or burned (there is no open node). 
We claim (proved below) that for each selected node $a$ at any level of $T_1$ (except for the root) there exists a selected node $b$ at a higher level such that the path between $a$ and $b$ in $T_1$ has at most three edges, i.e. the distance between $a$ and $b$ is at most 3 in the metric of $T_1$. For each selected node $a$, we add the edge $(a,b)$ to $T_2$. As each $a$ is connected to a node in a higher level, all nodes of $T_2$ are connected (via root) and hence it is a tree.

\begin{figure}[htb]
	\centering
	\setlength{\tabcolsep}{0in}
	$\begin{tabular}{cc}
		\multicolumn{1}{m{.68\columnwidth}}{\centering\includegraphics[width=.43\columnwidth]{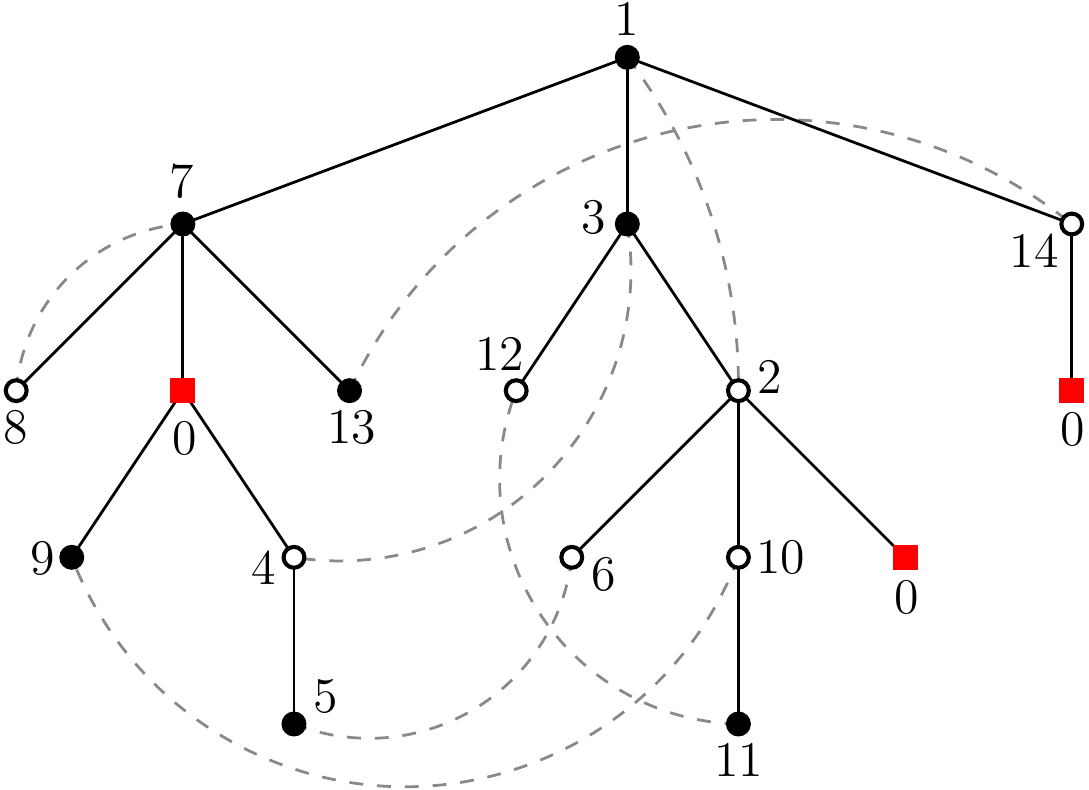}}
		&\multicolumn{1}{m{.32\columnwidth}}{\centering\includegraphics[width=.10\columnwidth]{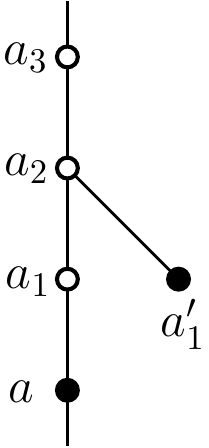}}
		\\
		(a)&(b)
	\end{tabular}$
	\caption{(a) Node selection (dashed lines represent two nodes in the same cluster): red squares (belong to clusters of size one) and black nodes (belong to clusters of size two) are selected whereas the white nodes (paired with black nodes) are burned. (b) Illustration for the edge length of $T_2$.}
	\label{local-fig}
\end{figure}

Now we verify the above claim. Let $a_1$ be the parent of $a$, as in Figure~\ref{local-fig}(b). If $a_1$ is selected then set $b=a_1$ and we are done. Assume that $a_1$ is burned. Let $a_2$ be the parent of $a_1$. If $a_2$ is selected then set $b=a_2$ and we are done. Assume that $a_2$ is also burned.
Notice that $a_2$ was burned before $a_1$ was, because otherwise the selection process would select $a_2$ right after burning $a_1$. Right after burning $a_2$ the process have checked the parent of $a_2$ which we denote by $a_3$. If $a_3$ was open then it would have been selected, and thus we set $b=a_3$ and we are done. If $a_3$ was burned then the process would have checked the children of $a_2$ and have selected a child $a'_1$ because $a_2$ had an open child which was $a_1$; this case is depicted in Figure~\ref{local-fig}(b). In this case we set $b=a'_1$ and we are done. The existence of $a_1$, $a_2$, and $a_3$ comes from the fact that the root of $T_1$ is a selected node.

\paragraph{Remark.} It might be tempting to use our $3$-approximation algorithm for the $2$-GBST problem to obtain a $3$-approximation for the $2$-DBST problem, say by coloring the selected nodes red and the burned nodes blue. This may not be an easy task because each time the process starts by selecting an {\em arbitrary} open node, these selected nodes could form a long path between burned nodes.

\section{The $k$-PBST problem}

Let $k\geqslant 2$ be an integer. In this section we study the $k$-PBST problem: Given $kn$ points in some metric space, find $k$ trees each containing exactly $n$ points and minimize the largest edge length (over all edges of the $k$ trees). 
First we prove the hardness of this problem. We assume that $n$ is at least 3, because if $n=2$ then the problem is equivalent to the bottleneck matching problem which can be solved in polynomial time.
Then we present an approximation algorithm for this problem. 


\begin{theorem}
	\label{2PBST-hardness}
	Unless P = NP, there is no polynomial-time algorithm that approximates the metric $k$-partition bottleneck spanning tree problem by a factor better than 2, for any $k\geqslant 2$.
\end{theorem}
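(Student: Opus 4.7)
The plan is to reduce from the $2$-balanced partitioning problem of Dyer and Frieze~\cite{Dyer1985}: given a graph $G$ on $2n$ vertices, decide whether the vertex set admits a partition $V(G)=V_1\cup V_2$ with $|V_1|=|V_2|=n$ such that both $G[V_1]$ and $G[V_2]$ are connected. This is NP-hard, so it suffices to show that a polynomial-time algorithm approximating $k$-PBST within a factor strictly better than $2$ would decide it.

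First I would construct, from an instance $G$ of 2-balanced partitioning, a metric instance of $k$-PBST on $kn$ points as follows. Take the $2n$ vertices of $G$, and add $(k-2)n$ new points partitioned into $k-2$ auxiliary groups $U_1,\dots,U_{k-2}$, each of size exactly $n$. Define the pairwise distance to be $1$ when the two points are endpoints of an edge of $G$, or belong to the same group $U_i$; set all other distances to $2$. Since every distance lies in $\{1,2\}$, the triangle inequality is trivially satisfied, so this is a valid metric. The construction is obviously polynomial.

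Next I would prove the equivalence: $G$ admits a balanced connected 2-partition if and only if the constructed $k$-PBST instance has optimal bottleneck $1$. For the forward direction, a balanced connected partition $(V_1,V_2)$ of $G$ yields two spanning trees on $V_1$ and $V_2$ using only edges of $G$ (weight $1$), and each $U_i$ is a clique of weight-$1$ edges and so contains a spanning tree of weight $1$; together these give $k$ disjoint trees of size $n$, all of bottleneck $1$. For the reverse direction, suppose $k$ such trees exist with all edges of weight $1$. The weight-$1$ subgraph of our metric has connected components exactly $U_1,\dots,U_{k-2}$ together with the connected components of $G$, and no weight-$1$ edge crosses between any two of these blocks. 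Hence every tree lies entirely inside one block. Since each $U_i$ has exactly $n$ vertices, each $U_i$ must form one of the $k$ trees, which leaves precisely two trees of size $n$ partitioning $V(G)$, each of which must induce a connected subgraph of $G$; this is the sought balanced connected 2-partition.

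Finally, because all distances are in $\{1,2\}$, the optimal bottleneck value is either $1$ or $2$, and for $n\geqslant 3$ any feasible solution has at least one edge. Hence any approximation algorithm with ratio strictly less than $2$ would return a tree collection with bottleneck $<2$, i.e., equal to $1$ whenever the optimum is $1$, and would therefore decide 2-balanced partitioning in polynomial time. The main conceptual care in the proof is making sure the $k-2$ auxiliary groups cannot be recombined with $G$ or with each other to obtain a bottleneck-$1$ solution in a way that bypasses the partitioning requirement; this is precisely why all cross-block distances are set to $2$ and each $U_i$ is sized to exactly $n$, forcing the auxiliary groups to each become an entire tree on their own.
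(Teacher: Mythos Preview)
Your argument is correct. The forward direction is immediate, and in the reverse direction your observation that every weight-$1$ tree must lie in a single connected component of the weight-$1$ graph, together with the fact that the $k$ trees partition all $kn$ points and each $U_i$ has exactly $n$ vertices, cleanly forces each $U_i$ to be one of the trees, leaving a balanced connected $2$-partition of $G$.

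The paper takes a slightly different and shorter route: it reduces directly from the $k$-balanced connected partitioning problem (Dyer and Frieze prove NP-hardness for every fixed $k$ with $2\leqslant k\leqslant |V|/3$), so no auxiliary groups are needed---one simply completes $G$ to a metric graph with weights $1$ on $E(G)$ and $2$ elsewhere, and observes that a bottleneck-$1$ solution is exactly a balanced connected $k$-partition. Your reduction instead uses only the $k=2$ case of Dyer--Frieze and pads with $k-2$ size-$n$ cliques; this costs you the extra paragraph arguing that the cliques are forced to be whole trees, but in exchange you need only the simplest (and best-known) instance of the source hardness result. Both approaches yield the same inapproximability threshold.
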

\begin{proof}
	We use a reduction from the NP-hard problem of partitioning the vertex set of a graph $G=(V,E)$ into $k$ ($2\leqslant k\leqslant |V|/3$) equal-size subsets $V_1,\dots,V_k$ such that the induced subgraph by each $V_i$ is connected \cite{Dyer1985}. Let $G'$ be the complete edge-weighted graph obtained by adding edges to $G$ and then assigning weight 1 to every edge of $E$ and weight 2 to every other edge. Notice that $G'$ is a metric graph with $|V|$ vertices. It is easily seen that the partition problem on $G$ has a solution if and only if $G'$ contains $k$ equal-size spanning trees with edges of weight 1.
	The inapproximability claim follows because any approximation algorithm with factor less than 2 would give spanning trees with edges of weight 1, which would solve the partitioning problem on $G$.
\end{proof}

If in the proof of Theorem~\ref{2PBST-hardness} we replace all edge-weights of 2 with an arbitrary large constant, we obtain the following corollary.

\begin{corollary}
	\label{2PBST-hardness-cor}
	It is NP-hard to approximate the non-metric $k$-partition bottleneck spanning tree problem within any constant factor, for any $k\geqslant 2$.
\end{corollary}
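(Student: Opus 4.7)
\textbf{Proof proposal for Corollary~\ref{2PBST-hardness-cor}.}
The plan is to adapt the reduction used in Theorem~\ref{2PBST-hardness} verbatim, with the single modification that the weight $2$ assigned to non-edges of $G$ is replaced by an arbitrary constant $M$ (to be chosen depending on the approximation factor we wish to rule out). Because we are now in the non-metric regime, nothing forces $M$ to lie close to $1$, so we have a free parameter to exploit.

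First I would fix an arbitrary constant $\alpha \geqslant 1$ and, starting from an instance $G=(V,E)$ of the $k$-balanced connected partitioning problem (known to be NP-hard for $2\leqslant k\leqslant |V|/3$ by Dyer and Frieze~\cite{Dyer1985}), construct the complete edge-weighted graph $G'$ on vertex set $V$ whose edge weights are $1$ for edges of $E$ and $M:=\alpha+1$ for all remaining edges. Note that $G'$ is no longer required to be metric, so no triangle-inequality issue arises.

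Next I would verify the following dichotomy exactly as in Theorem~\ref{2PBST-hardness}. If $G$ admits a partition of $V$ into $k$ equal-size subsets each inducing a connected subgraph, then $G'$ contains $k$ equal-size spanning trees whose edges all lie in $E$, so the optimal bottleneck is $1$. Conversely, if $G$ has no such partition, then every feasible solution for the $k$-PBST instance on $G'$ must use at least one non-$E$ edge, so its bottleneck is at least $M=\alpha+1$. Thus a polynomial-time $\alpha$-approximation for the non-metric $k$-PBST would, on YES-instances, return a solution of bottleneck at most $\alpha\cdot 1=\alpha$, while on NO-instances it must return a solution of bottleneck at least $\alpha+1$; comparing the returned bottleneck against $\alpha$ decides the partitioning problem and contradicts P $\neq$ NP.

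There is essentially no obstacle here beyond checking that the equivalence established in Theorem~\ref{2PBST-hardness} is insensitive to the exact numerical value of the ``large'' weight; the only adjustment is the choice of $M$ as a function of the target ratio $\alpha$. Since $\alpha$ is an arbitrary constant, the same reduction (with $M$ chosen accordingly) rules out every constant-factor approximation simultaneously, giving the claimed inapproximability.
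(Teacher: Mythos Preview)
Your proposal is correct and follows exactly the approach the paper takes: the paper simply states that replacing all edge-weights of $2$ in the reduction of Theorem~\ref{2PBST-hardness} with an arbitrary large constant yields the corollary. Your write-up merely spells out the choice $M=\alpha+1$ and the resulting gap argument in more detail, which is entirely in line with the paper's one-line justification.
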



\subsection{Approximating the $k$-PBST}

Now we present an $\alpha(k)$-approximation algorithm for the $k$-PBST problem, where $\alpha(k)=2$ for $k=2,3$ and $\alpha(k)=3$ for $k\geqslant 4$. In view of  Theorem~\ref{2PBST-hardness} the factor $2$ is the best achievable for $k=2,3$. 
Given $kn$ points in a metric space, we show how to construct $k$ trees each containing exactly $n$ points and their largest edge length is at most $\alpha(k)\cdot\lambda^*$, where $\lambda^*$ is the bottleneck of a fixed optimal solution. 

We start by computing a minimum spanning tree $T$ of all points. Let $e$ be a longest edge of $T$, that is $\lambda(T)=w(e)$. Let $T'$ and $T''$ be the two trees obtained by removing $e$ from $T$. If the number of nodes in $T'$ and in $T''$ are multiples of $n$, say $i\cdot n$ and $j\cdot n$ where $i+j=k$, then we recursively construct $i$ trees on the nodes of $T'$ and $j$ trees on the nodes of $T''$.

Assume that the number of nodes in $T'$ and $T''$ are not multiples of $k$. Then the optimal solution must have an edge between a node of $T'$ and a node of $T''$. The length of this edge is at least $w(e)$, and thus $\lambda^*\geqslant \lambda(T)$.
Then by Theorem~\ref{partitioning-thr} we obtain $k$ trees on the nodes of $T$ such that their edge lengths are at most $\alpha(k)\cdot\lambda(T)$. The following theorem summarizes our result in this section.

\begin{theorem}
	\label{PBST-thr}
	There exists a polynomial-time $\alpha$-approximation algorithm for the $k$-partition bottleneck spanning tree problem on points in a metric space where $\alpha=2$ for $k=2,3$ and $\alpha=3$ for $k\geqslant 4$. The approximation factor 2 for $k=2,3$ is the best achievable in polynomial time. 
\end{theorem}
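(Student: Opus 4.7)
The plan is to combine a top-down splitting of a minimum spanning tree with the tree-partitioning guarantee of Theorem~\ref{partitioning-thr}. Concretely, I would compute a minimum spanning tree $T$ of the $kn$ input points, let $e$ be a longest edge of $T$ with $\lambda(T)=w(e)$, and let $T'$ and $T''$ be the two subtrees obtained upon removing $e$. If $|T'|$ and $|T''|$ are both multiples of $n$, say $in$ and $jn$ with $i+j=k$, then $e$ need not appear in any output tree, and I would recurse on the two sides, producing $i$ trees on the nodes of $T'$ and $j$ trees on the nodes of $T''$. If instead $|T'|$ and $|T''|$ are not both multiples of $n$, the algorithm stops recursing and invokes Theorem~\ref{partitioning-thr} on the current subtree to extract the required trees directly.

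The key observation driving the approximation ratio is the inequality $\lambda^*\geqslant \lambda(T^*)$ in the ``uneven'' base case, where $T^*$ is the current subtree. Indeed, any feasible partition of $T^*$'s nodes into trees of size exactly $n$ must place at least one tree with nodes on both sides of $e$ (otherwise the two sides would have sizes divisible by $n$), and being connected, this tree must use an edge of the underlying metric graph crossing the cut defined by $e$; by the cut property of minimum spanning trees, any such edge has weight at least $w(e)=\lambda(T^*)$. Applying Theorem~\ref{partitioning-thr} to $T^*$ then produces $k$ disjoint subtrees of size $n$ in which each new edge corresponds to a path of at most $\alpha$ edges of $T^*$, so by the triangle inequality each output edge has length at most $\alpha\cdot \lambda(T^*)\leqslant \alpha\cdot \lambda^*$. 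The ``even'' recursive branches are then handled by induction: the discarded edge $e$ never contributes to the bound, and the inductive hypothesis applied to $T'$ and $T''$ (each a connected subtree of an MST, hence itself an MST of its own point set) yields the same guarantee on the two halves.

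The main obstacle is clearly Theorem~\ref{partitioning-thr} itself, which my plan invokes as a black box. Designing a polynomial-time partitioning procedure that guarantees every new edge uses a path of at most $\alpha\in\{2,3\}$ edges in $T^*$, together with matching lower-bound constructions witnessing that $\alpha=2$ is attainable for $k\in\{2,3\}$ while $\alpha=3$ is necessary once $k\geqslant 4$, is where essentially all the combinatorial work resides. The rest of my argument merely reduces the metric problem to this purely graph-theoretic statement.

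Finally, the tightness of the factor $2$ for $k\in\{2,3\}$ is immediate from Theorem~\ref{2PBST-hardness}: any polynomial-time approximation with ratio strictly less than $2$ would resolve an NP-hard problem. The total running time is polynomial, being dominated by a single MST computation together with the polynomial-time subroutine promised by Theorem~\ref{partitioning-thr}.
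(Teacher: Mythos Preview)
Your proposal is correct and matches the paper's approach exactly: compute the MST, split on its longest edge, recurse when both sides have size divisible by $n$, and otherwise invoke Theorem~\ref{partitioning-thr} after deducing $\lambda^*\geqslant\lambda(T^*)$ via the MST cut property, with tightness for $k\in\{2,3\}$ supplied by Theorem~\ref{2PBST-hardness}. The only clean-up needed in the recursive case is that on a subtree $T^*$ with $in$ nodes you apply Theorem~\ref{partitioning-thr} with parameter $i$ (not $k$), and the crossing argument should be phrased against the \emph{original} optimum on all $kn$ points rather than a partition of $T^*$'s nodes alone---both tweaks are immediate and harmless since $\alpha(i)\leqslant\alpha(k)$.
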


\subsection{Balanced tree partitioning theorem}
In this section we prove the following theorem. 
We denote the number of nodes of a tree $T$ by $|T|$.

\begin{theorem}
	\label{partitioning-thr}
	Given a tree $T$ with $kn$ nodes we can obtain $k$ disjoint trees $T_1,\dots,T_k$ each containing exactly $n$ nodes of $T$ such that 	\begin{enumerate}
		\item If $k=2,3$ then the length of edges in each $T_i$ is at most $2$ in the metric of $T$. 
		\item If $k\geqslant 4$ then the length of edges in each $T_i$ is at most $3$ in the metric of $T$.
	\end{enumerate}
The upper bounds 2 and 3 for the edge lengths are the best achievable.
\end{theorem}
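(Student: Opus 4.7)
The plan is to prove the upper bounds by an iterative peeling algorithm and then to establish tightness by explicit tree constructions.

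\textbf{Algorithm (upper bound).} I would root $T$ at an arbitrary leaf and walk downward to find the lowest node $v$ with $N(v) \ge n$; then every child $u$ of $v$ satisfies $N(u) < n$. If $N(v) = n$, I extract the entire subtree of $v$ as the next group: its induced subtree in $T$ gives all edges of length $1$. The remainder is a tree of size $(k-1)n$, so we recurse. Otherwise the children subtree sizes $s_1, \dots, s_j$ each lie in $[1, n-1]$ and sum to at least $n$. If some subset $I$ satisfies $\sum_{i \in I} s_i = n$, extract $\bigcup_{i \in I} S_{u_i}$ as the group, spanning it by length-$1$ edges inside each $S_{u_i}$ together with length-$2$ edges between the corresponding sibling-roots (which go through $v$). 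The remainder remains connected via $v$, so we again recurse on a tree.

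\textbf{Key case and main obstacle.} The difficult case is when no subset of the $s_i$'s sums to exactly $n$. Here I must take whole subtrees of some children together with a partial piece of one other child's subtree $S_{u_{j'}}$, and I would choose this partial piece to contain $u_{j'}$ itself so that it is at $T$-distance $2$ from each sibling-root in the extracted group. The residual of $S_{u_{j'}}$ then hangs as a forest at various depths within the remaining structure. The main obstacle is carrying an invariant through the recursion that controls the edge lengths generated by these residuals: for $k \in \{2,3\}$, a tighter argument about the residual forest keeps the bound at $2$, while for $k \ge 4$ one subsequent peel may have to cross one extra $T$-edge, pushing the worst-case bound to $3$. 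The bulk of the proof effort will go into making this invariant precise and verifying it through the recursion.

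\textbf{Tightness.} For the lower bounds I would exhibit worst-case trees parameterized by $n$. For $k \in \{2,3\}$ and bound $2$, take an appropriately subdivided path (or a small caterpillar) and show by a case analysis on which $n$-subsets can be spanned by length-$1$ edges alone that any balanced $k$-partition contains two nodes of the same group at $T$-distance exactly $2$. For $k \ge 4$ and bound $3$, take a spider with roughly $k+1$ equal-length legs meeting at a hub; a pigeonhole argument on the legs against the $k$ parts forces some group to contain two nodes lying on distinct legs, and the chosen leg length makes their $T$-distance exactly $3$.
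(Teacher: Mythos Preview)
Your proposal has the right overall shape but leaves the hardest step unproved, and the paper's actual approach differs substantially in two of the three regimes.

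\textbf{The main gap.} For $k\geqslant 4$ you propose iterative peeling and state that the ``bulk of the proof effort'' will go into an invariant controlling edge lengths across peels, but you never state the invariant. This is precisely where the difficulty lies: once you extract a partial piece of some child subtree $S_{u_{j'}}$, the remainder is no longer a tree in $T$, and the next peel may need to reach across the hole you created. The paper in fact remarks explicitly (after its $k=3$ algorithm) that the peeling procedure \emph{does not extend} to $k=4$, because the pivot node $v$ gets consumed and subsequent peels are forced to introduce longer edges. So your proposed route is exactly the one the paper abandons. Instead, for $k\geqslant 4$ the paper uses a one-shot argument: the cube $T^3$ of any tree is Hamiltonian (Karaganis; Lesniak), so one finds a Hamiltonian \emph{path} on the nodes of $T$ with every edge of length at most $3$ and simply cuts it into $k$ equal pieces. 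This avoids recursion entirely.

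\textbf{Differences for $k=2,3$.} For $k=2$ the paper does not peel at all: it roots $T$ at a leaf, colours odd levels red and even levels blue, connects each node to its grandparent (length $2$), and then rebalances by moving leaves from the larger colour class to the smaller one. This level-parity trick is what then serves as a black-box subroutine in the $k=3$ algorithm, where the paper (like you) finds the lowest $v$ with $N(v)\geqslant n$, but handles the partial-subtree case by invoking the $k=2$ routine on $U_j\cup\{v\}$ to split it into two rooted pieces of prescribed sizes. Your subset-sum step (``if some subset $I$ satisfies $\sum s_i=n$'') is unnecessary and is not used in the paper.

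\textbf{Tightness.} Your lower-bound sketch for $k\geqslant 4$ matches the paper's spider construction (hub with $k+1$ legs of length $k-1$, so $n=k$), and the pigeonhole is on the $k+1$ neighbours of the hub against the $k-1$ trees not containing the hub. For $k\in\{2,3\}$, however, a subdivided path does \emph{not} work: the path on $4$ nodes admits the partition $\{1,2\},\{3,4\}$ with all edges of length $1$. The paper uses stars with $3$ and $5$ leaves respectively; with $2k-1$ leaves and $k$ groups of size $2$, two leaves land in the same group by pigeonhole.
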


For the proof we first show that the upper bounds 2 and 3 are the best achievable. Then we present algorithms that achieve {\em desirable} trees $T_1,\dots,T_k$ with the claimed edge lengths. The lengths mentioned in our proof are in the metric of $T$.

\paragraph{Upper bounds.}
It is easily seen that the upper bound of 2 is the best achievable (for $k=2,3$) for example when $T$ is a star with 3 and 5 leaves, respectively.

\begin{wrapfigure}{r}{0.3\textwidth}
	\begin{center}
		\vspace{-23pt}
		\includegraphics[width=.28\textwidth]{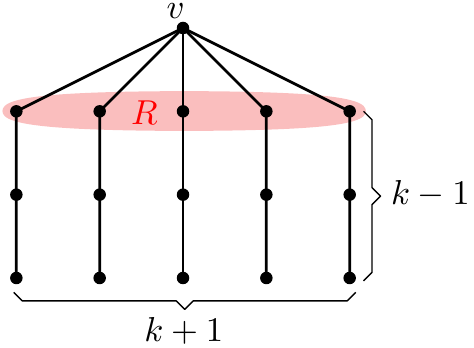}
		\label{notation-fig}
	\end{center}
	\vspace{-16pt}
\end{wrapfigure}
To verify that 3 is the best achievable upper bound (for $k\geqslant 4$) consider a tree $T$  rooted at a node $v$ with $k+1$ subtrees each is a path with $k-1$ nodes; see the figure to the right for $k=4$. The tree $T$ has $k^2$ nodes. Let $R$ be the set of $k+1$ nodes that are at distance 1 from $v$. Each node of $R$ {\em represents} a path connected to $v$. Now consider any set of $k$ disjoint trees $T_1,\dots,T_k$ each consisting of $k$ nodes of $T$. We show, by contradiction, that the length of an edge in some $T_i$ is at least 3. After a suitable relabeling assume that $v$ belongs to $T_1$. Then each tree $T_i$ with $i\in\{2,\dots,k\}$ should have nodes from at least two of the paths connected to $v$ because each path itself has $k-1$ nodes. In particular $T_i$ should contain the representatives of these paths because otherwise $T_i$ should have an edge of length at least $3$. Thus each $T_i$ contains at least two distinct nodes from $R$. This implies that $|R|\geqslant 2 (k-1)$. Combining this inequality with the fact that $|R|=k+1$, implies that $k\leqslant 3$ which is a contradiction.

\paragraph{Algorithm for $k\geqslant 4$.} Our algorithm for $k\geqslant 4$ uses the fact that the cube of $T$ is Hamiltonian. It is implied from a result of Karaganis \cite{Karaganis1968} and independently from a result of Lesniak \cite{Lesniak1973} that in polynomial time we can find a Hamiltonian path on nodes of $T$ with edges of length at most $3$. By cutting this path into $k$ equal-size pieces we obtain $k$ desired trees.

\paragraph{Remark.} One could simply obtain a 2-approximation if the {\em square}\footnote{The square of a graph $G$ has the same vertices as $G$, and has an edge between two distinct vertices if and only if there exists a path, with at most two edges, between them in $G$.} of $T$ has a Hamiltonian path. However, this property holds only for a very restricted class of trees called {\em horsetail} \cite{Radoszewski2011}.

\paragraph{Algorithm for $k=2$.}

\begin{figure}[htb]
	\centering
	\includegraphics[width=.8\columnwidth]{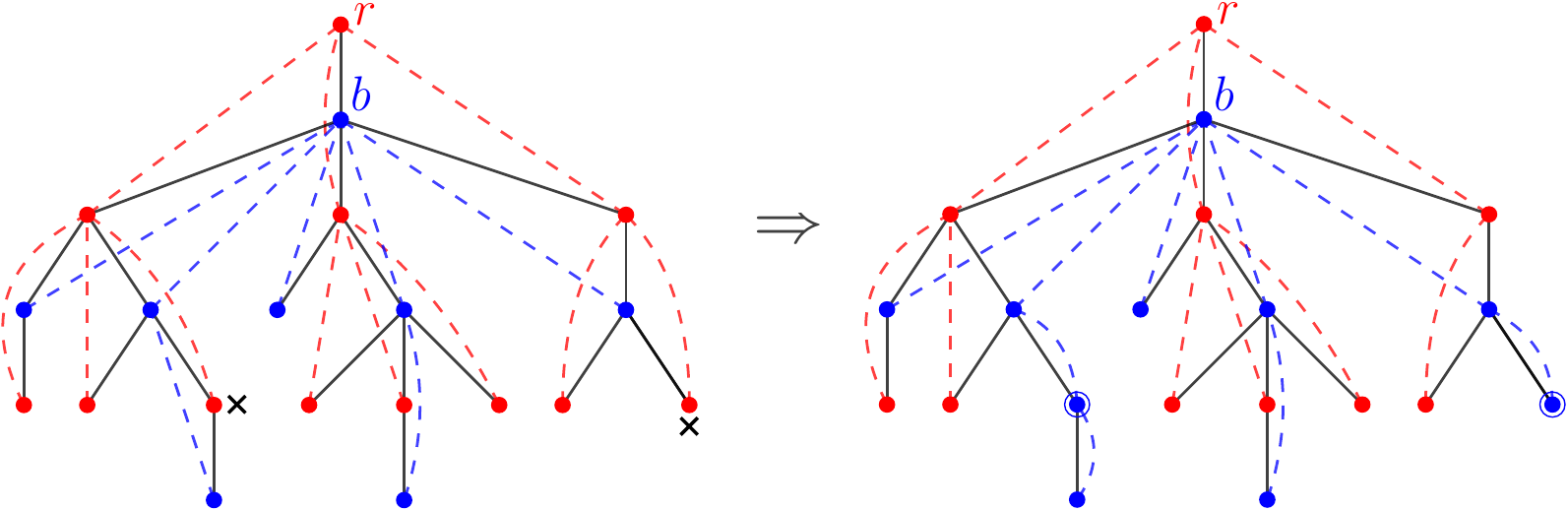}
	\caption{Obtaining trees $R$ (in red) and $B$ (in blue) from $T$ (in black).}
	\label{PBST-fig}
\end{figure}

We show how to find two disjoint trees $R$ and $B$ each containing exactly $n$ nodes of $T$ and the length of their edges is at most $2$.
To simplify our description we assume that the nodes of $R$ and $B$ are colored red and blue, respectively.

We root $T$ at a leaf $r$, as in Figure~\ref{PBST-fig}. Then $r$ has only one child which we denote by $b$. Assume that $r$ is at level 1, its child $b$ is at level 2, the children of $b$ are at level 3, and so on. Color all nodes at odd levels red and color all nodes at even levels blue. Compute a rooted tree $R$ on red points by connecting each red node to its grandparent, and compute a rooted tree $B$ on blue points by connecting each blue node to its grandparent, as in Figure~\ref{PBST-fig}. Notice that $R$ is rooted at $r$ and $B$ is rooted at $b$. Since each red node (resp. blue node) is connected to its grandparent, every edge of $R$ (resp. $B$) has length $2$.

If $R$ has $n$ nodes, so does $B$, and hence $\{R,B\}$ is a 2-approximate solution. If one tree, say $R$, has more than $n$ nodes, then we iteratively remove a leaf from $R$ until it is left with exactly $n$ nodes. We color the removed nodes of $R$ by blue; see Figure~\ref{PBST-fig}-right. Then we recompute the tree $B$ from the beginning by connecting each blue node to one of its parent and grandparent that is blue. Since no new edge is introduced in $R$, its edges still have length 2. Since each blue node is connected to its parent or grandparent (in $T$), the length of its edges is at most $2$. Therefore, the new trees $R$ and $B$ are desirable.

\paragraph{Remark.} It is easily seen that the above algorithm can be extended to obtain trees $R$ and $B$ of different sizes (as long as $|R|+|B|=|T|$) with the same upper bound of 2 on their edge lengths. 

\paragraph{Algorithm for $k=3$.} Notice that $T$ has $3n$ nodes.
We show how to find three disjoint trees $R$, $G$, and $B$ each containing exactly $n$ nodes of $T$ and the length of their edges is at most $2$.

We root $T$ at a leaf node. For each node $v$ in $T$, let $N(v)$ denote the number of nodes in the subtree rooted at $v$; the node $v$ is counted. Then we look at all nodes $v$ for which $N(v)$ is at least $n$. Among those, pick a node $v$ for which $N(v)$ is minimum. 
Then $N(v)$ is at least $n$ and each of its children has a subtree of size at most $n-1$. Observe that $v$ is not the root.

If $N(v)=n$ then we take the subtree rooted at $v$ as $R$, remove $R$ from $T$, and then obtain two trees $G$ and $B$ from the new tree $T$ (which now has $2n$ nodes) using our algorithm for $k=2$.

Assume that $N(v)>n$. Then $v$ has at least two children which we denote by $u_1, u_2,\dots, u_m$ where $m\geqslant 2$. Let $U_i$ denote the subtree rooted at $u_i$. Take the smallest index $j$ in $\{1,\dots,m\}$ for which $|U_1|+\dots +|U_j|\geqslant n$. Then $|U_1|+\dots+|U_{j-1}|<n$. Let $n'_1=n-(|U_1|+\dots+|U_{j-1}|)$. Let $U^v_j$ be the subtree consisting of $U_j$ and the node $v$ together with the edge connecting $v$ to $u_j$. We use our algorithm for $k=2$ to obtain from $U^v_j$ two trees $T'_j$ and $T''_j$ of sizes $n'_1$ and $|U^v_j|-n'_1=|U_j|+1-n'_1$, respectively, such that $T'_j$ is rooted at $u_j$, $T''_j$ is rooted at $v$, and their edge lengths are at most 2; see Figure~\ref{3PBST-fig}. Now we obtain $R$ by taking the trees  $U_1,\dots, U_{j-1}$, and $T'_j$ and interconnecting their roots to form one tree. Notice that $R$ has $n$ nodes and its edge lengths are at most 2. 
We remove the nodes of $R$ from $T$. We also remove all edges of $T$ that lie in $U_j$, and add the edges of $T''_j$ (which are of length at most 2) to $T$. Notice that $|T''_j|<n$ because it does not have $u_j$ (although it contains $v$). To obtain $G$ and $B$ we consider the following cases depending on the number $N(v)$ in the new tree $T$ which has $2n$ nodes:

\begin{figure}[htb]
	\centering
	\includegraphics[width=.85\columnwidth]{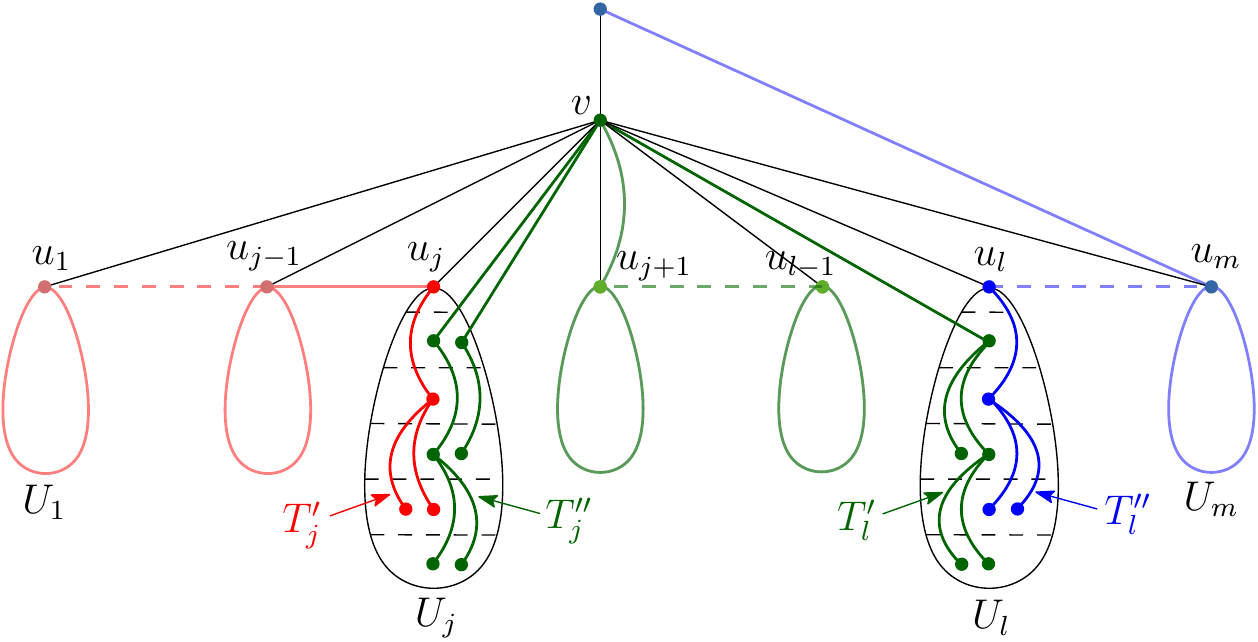}
	\caption{Obtaining trees $R$ (in red), $G$ (in green), and $B$ (in blue) from $T$ (in black). The trees $T'_j$, $T''_j$, $T'_l$, and $T''_l$ are shown with bold edges.}
	\label{3PBST-fig}
\end{figure}

\begin{itemize}
	\item $N(v)=n$. In this case we take the subtree rooted at $v$ as $G$, remove $G$ from $T$, and then take the resulting tree $T$ (which now has $n$ nodes) as $B$.

\item $N(v)<n$. We walk up the tree $T$ from $v$ and stop at the first node $w$ for which $N(w)\geqslant n$. We repeat the above process to obtain $G$ (which is now playing the role of $R$) but we denote the subtree of $w$ that contains $v$ by $U_1$. This ensures that the edges of $T''_j$ will appear in $G$ without getting longer. After obtaining $G$, the remaining part of $T$ will form the tree $B$.

\item $N(v)>n$. See Figure~\ref{3PBST-fig}. In this case we somehow repeat a procedure similar to what we did to obtain $R$. Let $l\in\{j+1,\dots,m\}$ be the smallest index for which $|T''_j|+|U_{j+1}|+\dots+|U_l|\geqslant n$. Notice that $U_{j+1}$ exists because $N(v)>n$. Then $|T''_j|+|U_{j+1}|+\dots+|U_{l-1}|< n$. Let $n'_2=n-(|T''_j|+|U_{j+1}|+\dots+|U_{l-1}|)+1$ (the addition of 1 will become clear shortly). Let $U^v_l$ be the subtree consisting of $U_l$ and the node $v$ together with the edge connecting $v$ to $u_l$ (notice that $v$ also belongs to $T''_j$). We use our algorithm for $k=2$ one more time to obtain from $U^v_l$ two trees $T'_l$ and $T''_l$ of sizes $n'_2$ and $|U^v_l|-n'_2=|U_l|+1-n'_2$, respectively, such that $T'_l$ is rooted at $v$, $T''_l$ is rooted at $u_l$ and their edge lengths are at most 2. Now we obtain $G$ by taking the trees $T''_j$, $U_{j+1},\dots, U_{l-1}$, and $T'_l$ and then interconnecting their roots to form one tree. The tree $G$ has $n$ nodes (without double counting $v$ which is in both $T''_j$ and $T'_l$) and its edge lengths are at most 2. 
We obtain the third tree, i.e. $B$, as follows. We remove the nodes of $G$ from $T$. By interconnecting the roots of $T''_l, U_{l+1},\dots, U_m$ together and then connecting $u_m$ to the parent of $v$ (which exists) we obtain the tree $B$.  

\paragraph{Remark.} To see why the above procedure cannot be extended to the case of $k=4$, assume that $N(v)>n$ after the removal of $G$ from $T$. As $v$ is already used for making $G$ we cannot reuse it to make another tree, and hence we will be forced to introduce longer edges.
\end{itemize}
\section{Conclusions}
A natural open problem is to improve the presented approximation ratios further. Most of our approximation ratios consider the largest edge length of the standard BST as the lower bound. A better lower bound for the largest edge length of an optimal solution (not the standard BST) would improve the approximation ratios.  It would be interesting to explore whether our algorithm for the $2$-GBST problem could be extended to an $O(k)$-approximation algorithm for the $k$-GBST problem. Also it would be interesting to verify whether the approximation ratio of $3$ for the $k$-PBST problem ($k\geqslant 4$) is tight, knowing that $2$ is a lower bound. 

\bibliographystyle{abbrv}
\bibliography{General-BST}
\end{document}